\documentclass[a4paper,10pt]{llncs} %Tamaño de la letra, papel, y tipo de documento (report, article, book...)
\usepackage[utf8]{inputenc} %pner utf8 para no tener con la ñ, acentos, ç y cosas raras.
\usepackage[english]{babel} %Definimos el idioma

\usepackage{hyperref} 
\usepackage{amsfonts}
\usepackage{amsmath}
\usepackage{enumerate}
\usepackage{array}
\usepackage{microtype}
\usepackage{enumitem}
\usepackage{pgf,tikz}
\usetikzlibrary{arrows,decorations.pathmorphing,calc,patterns,decorations.pathreplacing,fadings,shapes}

\usepackage{xspace}
\usepackage{graphicx}
\usepackage{ amssymb }
\usepackage{subfig}

\usepackage{nameref}
\makeatletter
\def\namedlabel#1#2{\begingroup
   \def\@currentlabel{#2}%
   \label{#1}\endgroup
}
\makeatother

\newcommand{\setU}{\ensuremath{{U}}\xspace}

\newcommand{\cost}{\ensuremath{{\rm{cost}}}\xspace}

\newcommand{\prob}{\ensuremath{L}\xspace}
\newcommand{\alp}{\ensuremath{\Sigma}\xspace}
\newcommand{\cli}{\textsc{Clique}\xspace}
\newcommand{\is}{\textsc{Independent Set}\xspace}
\newcommand{\setc}{\textsc{Set Cover}\xspace}
\newcommand{\lp}{\textsc{Longest Path}\xspace}
\newcommand{\np}{\ensuremath{\textsf{NP}}\xspace}
\newcommand{\tsp}{\text{TSP}\xspace}
\newcommand{\lmod}{\ensuremath{\mathrm{lm}}\xspace}
\newcommand{\verc}{\textsc{Vertex Cover}\xspace}
\newcommand{\covc}{\textsc{Connected Vertex Cover}\xspace}
\newcommand{\sptr}{\textsc{Internal Vertex SubTree}\xspace}
\newcommand{\outtree}{\textsc{Leaf Out Tree}\xspace}
\newcommand{\tw}{\textsc{Tree Width}\xspace}
\pagestyle{plain}

\newcommand{\N}{\mathbb{N}}

\newcommand{\F}{\ensuremath{\mathcal{F}}\xspace}

\newcommand{\eg}{e.\,g.}
\newcommand{\ie}{i.\,e.}

\title{Reoptimization of Parameterized Problems}
\author{Hans-Joachim B\"{o}ckenhauer\inst{1} \and Elisabet
Burjons\inst{1} \and \\
Martin Raszyk\inst{1} \and Peter Rossmanith\inst{2}
\institute{Department of Computer Science, ETH Zurich, Switzerland\\\email{hjb@inf.ethz.ch, elisabet.burjons@inf.ethz.ch, m.raszyk@gmail.com}\and 
Department of Computer Science, RWTH Aachen, Germany\\\email{rossmani@informatik.rwth-aachen.de}}
}

\begin{document}

\maketitle

\begin{abstract}
 Parameterized complexity allows us to analyze the time complexity of problems with respect to a natural parameter
 depending on the problem. Reoptimization looks for solutions or approximations for problem instances when given solutions to neighboring instances.
 We combine both techniques, in order to better classify the complexity of problems in the parameterized setting.
 
 Specifically, we see that some problems in the class of compositional problems, which do not have polynomial kernels under standard
 complexity-theoretic assumptions, do have polynomial kernels under the reoptimization model for some local modifications.
 We also observe that, for some other local modifications, these same problems do not have polynomial kernels unless $\rm{NP}\subseteq \rm{coNP}/ \rm{poly}$.
 We find examples of compositional problems, whose reoptimization versions do not have polynomial kernels under any of the considered local modifications.
 Finally, in another negative result, we prove that the reoptimization version of \covc does not have a polynomial kernel unless
 \setc has a polynomial compression.
 
 In a different direction, looking at problems with polynomial kernels, we find that the reoptimization version of \verc has a polynomial kernel of size $2k$ using crown decompositions only,
 which improves the size of the kernel achievable with this technique in the classic problem.
\end{abstract}

\section{Introduction}

In this paper, we try to combine the techniques of reoptimization and parametrization in order to have a better understanding of what makes a problem
hard from a parameterized complexity point of view. The goal is, given a solution for an instance of a parameterized problem, try to 
look at local modifications and see if the problem becomes easier or if it stays in the same complexity class. For this, we look at
classical problems in parameterized complexity, whose complexity is well understood and classified. 

While the connections between reoptimization and parameterization were not systematically explored up to now, some links were already discovered. 
The technique of iterative compression which was introduced by Reed, Smith, and Vetta \cite{RSV04} was very successfully used to design parameterized algorithms, 
see the textbook by Cygan et al.~\cite{CFK+15} for an overview. It is closely related to common design techniques for reoptimization algorithms.
Abu-Khzam et al.~\cite{AEF+15} looked at the parameterized complexity of dynamic, reoptimization-related versions of dominating set and other problems, 
albeit more related to a slightly different model of reoptimization as introduced by Shachnai et al.~\cite{STT12}. Very recently, Alman, Mnich, 
and Williams \cite{AMW17} considered dynamic parameterized problems, which can be seen as a generalization of reoptimization problems. 

We start by introducing the main concepts of parameterized complexity and reoptimization that we are going to use in our results. 

\subsection{Parameterized Complexity}

Classical complexity theory classifies problems by the amount of time or space that is required by algorithms
solving them. Usually, the time or space in these problems is measured by the input size. However, 
measuring complexity only in terms of the input size ignores any structural information about the input 
instances, making problems appear sometimes more difficult than they actually are.

Parameterized complexity was developed by Downey and Fellows in a series of articles in the early 1990's~\cite{DF95,DF95-2}. 
Parameterized complexity theory provides a theory of intractability and of fixed-parameter tractability that relaxes
the classical notion of tractability, namely polynomial-time computability, by allowing non-polynomial computations only 
depending on a parameter independent of the instance size. 
For a deeper introduction to parameterized complexity we refer the reader to Downey et al.~\cite{DF13,FG06}.

We now introduce the formal framework for parameterized complexity that we use throughout the paper.
Let $\alp$ denote a finite alphabet and $\N$ the set of natural numbers.
A {\em decision problem} $\prob$ is a subset of $\alp^*$. 
We will call the strings $x\in\alp^*$, {\em input} of $\prob$, regardless of whether $x\in \prob$.
A {\em parameterized problem} is a subset $\prob\subseteq \alp^* \times \N$. An input $(x,k)$ to a parameterized 
language consists of two parts where the second part is the {\em parameter}. A parameterized problem \prob
is {\em fixed-parameter tractable} if there exists an algorithm that given an input $(x,k) \in \alp^*\times \N$,
decides whether $(x,k)\in\prob$ in $f(k)p(n)$ time, where $f$ is an arbitrarly computable function solely in $k$,
and $p$ is a polynomial in the total input length $n=|x|+k$. FPT is the class of parameterized problems which are 
fixed-parameter tractable.

A {\em kernelization} for a parameterized problem $\prob\subseteq \alp^* \times \N$ is an algorithm that, given 
$(x,k) \in \alp^*\times \N$, outputs in $p(n)$ time a pair $(x',k')\in\alp^* \times \N$, namely a {\em kernel}, such that 
$(x,k)\in \prob \iff (x',k')\in \prob$ and $|x'|,k'\le f(k)$, where $p$ is a polynomial and $f$ an arbitrary computable
function, $f$ is referred to as the {\em size} of the kernel.
If for a problem \prob, the size of the kernel $f$ is polynomial in $k$, we say that \prob has a {\em polynomial kernel}.
PK is the class of parameterized problems which have polynomial kernels.

A {\em Turing kernelization} is a procedure consisting of two parameterized problems $\prob_1$ and $\prob_2$ (typically $\prob_1=\prob_2$)
and a polynomial $g$ together with an oracle for $\prob_2$, such that, on an input $(x,k)\in \alp^*$, the procedures outputs the answer 
whether $x\in\prob_1$ in polynomial
time by querying the oracle for $\prob_2$ with questions of the form ``Is $(x_2,k_2)\in \prob_2$?'' for $|x_2|,k_2\le g(k)$.
Essentially, a Turing kernelization allows us to use an oracle for small instances,
in order to solve $\prob_1$ on a larger instance $(x,k)$. A {\em polynomial Turing kernelization} is a Turing kernelization 
where $g=(k)$ is a polynomial function. PTK is the class of parameterized problems which have polynomial Turing kernelizations.

The problem classes we defined up to now satisfy $\rm{PK}\subseteq\rm{PTK}\subseteq\rm{FPT}$.
There are well known problems, however, that are not known to be FPT. For example, $k$-\cli, which is the problem of identifying whether
a graph $G$ contains a clique of size $k$, is not contained in FPT under some standard complexity-theoretic assumptions. 
Neither the complementary problem $k$-\is, which is the problem of identifying whether 
a graph $G$ contains an independent set of size $k$, or the $k$-\setc problem, where given a universe set \setU and a family \F of subsets of \setU, we are asked
to determine whether there is a subset of \F of size $k$ which contains every element of \setU.
For these problems outside FPT there is a further
classification of their hardness in terms of the so-called {\em W hierarchy} consisting of classes $W[t]$ for $t\in\N$, such that
$W[t]\subseteq W[t+1]$. Moreover, $\rm{FPT}\subseteq W[1]$. 
For the definition of these classes and the theory behind it see~\cite{DF13}. 
In this paper, we will only use the classes $W[1]$ and $W[2]$. For them we have the following characterizations in terms of complete problems:
$k$-\cli and $k$-\is are complete for $W[1]$, and $k$-\setc is complete for $W[2]$.

\subsection{Reoptimization}

Often, one has to solve multiple instances of one optimization problem
which might be somehow related. Consider the example of a timetable for
some railway network. Assume that we have spent a lot of effort and
resources to compute an optimal or near-optimal timetable satisfying all
given requirements. Now, a small local change occurs like, \eg, the closing
of a station due to construction work. This leads to a new instance of our
timetable problem that is closely related to the old one. Such a situation
naturally raises the question whether it is necessary to compute a new
solution from scratch or whether the known old solution can be of any help.
The framework of \emph{reoptimization} tries to address this question: We
are given an optimal or nearly optimal solution to some instance of a hard
optimization problem, then a small local change is applied to the instance,
and we ask whether we can use the knowledge of the old solution to
facilitate computing a reasonable solution for the locally modified
instance. It turns out that, for different problems and different kinds of
local modifications, the answer to this question might be completely
different. Generally speaking, we should not expect that solving the
problem on the modified instance optimally can be done in polynomial time,
but, in some cases, the approximability might improve a lot.

This notion of reoptimization was mentioned for the first time by
Sch\"affter \cite{Sch97} in the context of a scheduling problem. Archetti
et al.\ \cite{ABS03} used it for designing an approximation algorithm for
the metric traveling salesman problem ($\Delta$\tsp) with an improved running time,
but still the same approximation ratio as for the original problem. But the
real power of the reoptimization concept lies in its potential to improve
the approximation ratio compared to the original problem. This was observed
for the first time by B\"ockenhauer et al.\ \cite{BFH+06} for
the $\Delta$\tsp, considering the change of one edge weight as a local
modification. Independently at the same time, Ausiello et al.\
\cite{AEM+06} proved similar results for TSP reoptimization under the local
modification of adding or removing vertices.

Intuitively, the additional information that is given in a reoptimization
setup seems to be rather powerful.   Intriguingly, many reoptimization
variants of \np-hard optimization problems are also \np-hard. A general
approach towards proving the \np-hardness of reoptimization problems uses a
sequence of reductions and can on a high level be described as follows
\cite{BHM+08}: Consider an \np-hard optimization problem \prob, a local
modification \lmod, and a resulting reoptimization problem \lmod-\prob.
Moreover, suppose we are able to transform an efficiently solvable instance
$x'$ of \prob to any instance $x$ of \prob in a polynomial number of
local modifications of type \lmod.  Then, any efficient algorithm for
\lmod-\prob could be used to efficiently solve \prob, thus the NP-hardness of \prob
implies the hardness of \lmod-\prob.

\subsection{Reoptimization of Parameterized Problems}

Now that we have seen the main concepts of parameterized complexity and reoptimization, 
we will formally define an instance for a reoptimization parameterized problem \lmod-\prob.
%first we need to define solution and cost etc. where???

Given a parameterized problem $\prob$, a \emph{solution} $s$ for a problem instance $(x,k)$ is a 
witness of size $|s|\le p(|x|)$ for some polynomial $p$, with which we can check in polynomial time
that $(x,k)\in \prob$.
In order to measure how good a solution is, we have to define the \emph{cost} of the solution.
For some parameterized problems, the parameter is already a measure of the goodness of the solution. For these problems,
we will say a solution $s$ has \emph{cost} $k$ if $(x,k)\in \prob$ but $(x,k')\not\in \prob$ 
for any $k'<k$, if \prob is a minimization problem, and $k'>k$, if \prob is a maximization problem.
% If $(x,k)\not\in \prob$, we define a solution $s$ to be just a trivial solution with $\cost(s)> k$
% for a minimization problem or $\cost(s)<k$ for a maximization problem.

In problems where the parameter $k$ is an intrinsic value of the instance rather than a quality measure,
we have to define an extra parameter $\gamma$ measuring the quality of the solutions.
A cost function $\cost(\cdot)$, is a polynomially computable function that, given a solution $s$ to an 
instance $(x,k)$ computes the value of $\gamma$ corresponding to this solution.
Often, this parameter will be the size of the solution, but other parameters can be used.
In these problems, we redefine an instance to be a triple $(x,k,\gamma)$ where $(x,k,\gamma)\in \prob$ if and only if $(x,k)\in \prob$
and there exists a solution $s$ with $\cost(s)\le \gamma$ if \prob is a minimization problem and $\cost(s) \ge \gamma$ 
if \prob is a maximization problem. 

From now on, we assume that $k$ is a cost parameter unless otherwise specified, and thus, we refer to instances as pairs $(x,k)$.
An \emph{instance} of a reoptimization problem \lmod-\prob consists of: an instance
of the parameterized problem $\prob$, $(x,k)\in \alp^*\times \N$,
together with a solution $s$ with $\cost(s)\le k$ for 
minimization problems and $\cost(s)\ge k$ for maximization problems if it exists, \ie, if $(x,k)\in \prob$, or $\bot$ if $(x,k)\notin \prob$, and a 
locally modified instance $(x_{\lmod},k')$, where $k'\le k$ for minimization problems and $k'\ge k$ for maximization problems and $k'\in f(k)$ where $f$ is a computable function. 
We say that $((x,k),s,(x_{\lmod},k'))\in \text{\lmod-\prob}$ if and only if $(x_{\lmod},k')\in \prob$.

We will also define a polynomial kernel for a reoptimization instance \\
$((x,k),s,(x_{\lmod},k'))$ as a polynomial kernel for 
$(x_{\lmod},k')$. This makes sense because $((x,k),s,(x_{\lmod},k'))\in \text{\lmod-\prob}$ if and only if $(x_{\lmod},k')\in \prob$.

\subsection{Our Contribution}

In this paper, we use reoptimization techniques to solve
parameterized problems or to compute better kernels for them. In particular, we show in Section~\ref{sec:comp} that some
compositional parameterized problems~\cite{BDFH09}, which do not have polynomial kernels under
standard complexity-theoretic assumptions, do have polynomial kernels in a reoptimization setting, for some local modifications. 
Moreover, in Section~\ref{sec:recowopk}, we show that, under the opposite local modifications, those same problems do not have
polynomial kernels unless $\rm{NP}\subseteq\rm{coNP}/\rm{poly}$. We also show that, some compositional problems do not have polynomial
kernels under any of the standard local modifications for graph problems, \ie, vertex or edge addition or deletion.

Section~\ref{sec:cvc} contains a reduction of \setc parameterized by the size of the universe to \covc, that shows  
that the reoptimization of \covc under edge addition does not have a polynomial kernel unless $\rm{NP}\subseteq\rm{coNP}/\rm{poly}$.

We then show in Section~\ref{sec:vc} that, for the reoptimization version of the vertex cover problem, 
the crown decomposition technique yields a kernel of size $2k$.

\section{Kernels for Compositional Problems}
\label{sec:comp}

Bodlaender et al.~\cite{BDFH09} define the concept of \emph{compositional parameterized problems}, specifically OR-compositional and AND-compositional problems, for both of which no polynomial kernel exists 
under standard complexity-theoretic assumptions. In this section, we see that some of these problems do indeed have polynomial kernels in a reoptimization setting, 
where an optimal solution or a polynomial kernel is given for a locally modified instance.

\subsection{Preliminaries}

A characterization of OR-compositional graph problems is the following.

\begin{definition}[\cite{BDFH09}]\label{def:compositional}
 Let $\prob$ be a parameterized graph problem. If for any pair of 
 graphs $G_1$ and $G_2$, and any integer $k\in\N$, we have 
 \[(G_1,k)\in \prob \vee (G_2,k)\in \prob \iff (G_1\cup G_2,k)\in \prob,\]
 where $G_1\cup G_2$ is the disjoint union of $G_1$ and 
 $G_2$, then $\prob$ is \emph{OR-compositional}. 
\end{definition}

Now, if we define the complement of a problem, an analogous characterization can be defined that will
identify problems whose complement is OR-compositional, the so-called AND-compositional problems.

\begin{definition}
 Let $\prob$ be a parameterized decision problem. The \emph{complement $\bar{\prob}$ of $\prob$}, is the decision problem resulting from reverting the yes- and no-answers.
\end{definition}

\begin{definition}[\cite{BDFH09}]
 \label{def:compositionalcompl}
 Let $\prob$ be a parameterized graph problem. If for any pair of 
 graphs $G_1$ and $G_2$, and any integer $k\in\N$, we have 
 \[(G_1,k)\in \prob \wedge (G_2,k)\in \prob \iff (G_1\cup G_2,k)\in \prob,\]
 where $G_1\cup G_2$ is the disjoint union of $G_1$ and 
 $G_2$, then $\bar{\prob}$ is OR-compositional and $\prob$ is \emph{AND-compositional}. 
\end{definition}

Bodlaender et al.~\cite{BDFH09} showed the following result.

\begin{theorem}[\cite{BDFH09}]
\label{thm:compositional}
NP-hard OR-compositional problems do not have polynomial kernels, unless $\rm{NP}\subseteq\rm{coNP}/\rm{poly}$, \ie, the polynomial hierarchy collapses.
\end{theorem}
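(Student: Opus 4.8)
The plan is to turn the two hypotheses into an \emph{OR-distillation} for an $\np$-complete language and then to invoke the distillation lower bound of Fortnow and Santhanam that underlies \cite{BDFH09}: if an $\np$-hard language admits an algorithm that, given $t$ instances, outputs in time polynomial in their total length a single instance whose length is bounded by a fixed polynomial in the \emph{largest} of the $t$ input lengths and which is a yes-instance iff one of the inputs is, then $\rm{NP}\subseteq\rm{coNP}/\rm{poly}$ and, by Yap's theorem, the polynomial hierarchy collapses to its third level. So it suffices to build such an algorithm from an assumed polynomial kernel for $\prob$.

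Assume then that $\prob$ is $\np$-hard, OR-compositional in the sense of Definition~\ref{def:compositional}, and has a polynomial kernel of size $k^{c}$ for some constant $c$; as in \cite{BDFH09} we also use that (the classical version of) $\prob$ lies in $\np$, which holds for the problems of interest such as \lp. Fix \textsc{Sat}. By $\np$-hardness there is a polynomial-time reduction $g$ sending a formula $\phi$ to an instance $(G_{\phi},k_{\phi})$ of $\prob$ with $|G_{\phi}|,k_{\phi}\le p(|\phi|)$ for some polynomial $p$, and by $\prob\in\np$ there is a polynomial-time reduction $h$ from (the classical version of) $\prob$ back to \textsc{Sat}. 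Now, given formulas $\phi_{1},\dots,\phi_{t}$ with $s:=\max_{i}|\phi_{i}|$, first apply $g$ to each, obtaining instances $(G_{i},k_{i})$ with $|G_{i}|,k_{i}\le p(s)$. Then partition the indices by the value of $k_{i}$ --- at most $p(s)+1$ values occur --- and, for each occurring value $k$, let $\bar G_{k}$ be the disjoint union of all $G_{i}$ with $k_{i}=k$; by OR-compositionality (applied inductively), $(\bar G_{k},k)\in\prob$ iff some $\phi_{i}$ with $k_{i}=k$ is satisfiable. Next kernelize each $(\bar G_{k},k)$ to an equivalent instance $(\hat G_{k},\hat k_{k})$ with $|\hat G_{k}|,\hat k_{k}\le k^{c}\le p(s)^{c}$, and apply $h$ to it to obtain a formula $\psi_{k}$ of size $\mathrm{poly}(p(s)^{c})=\mathrm{poly}(s)$. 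Finally output $\psi:=\bigvee_{k}\psi_{k}$, the disjunction over disjoint variable sets of the $\psi_{k}$. Then $\psi$ is satisfiable iff some $\psi_{k}$ is, iff some $(\hat G_{k},\hat k_{k})\in\prob$, iff some $(\bar G_{k},k)\in\prob$, iff some $\phi_{i}$ is satisfiable; the whole computation runs in time polynomial in $\sum_{i}|\phi_{i}|$ (using $t\le\sum_{i}|\phi_{i}|$ to bound the cost of the kernelizations on the possibly large graphs $\bar G_{k}$); and $|\psi|$ is bounded by a polynomial in $s$ that does not depend on $t$. Hence \textsc{Sat} has an OR-distillation, which contradicts the assumption that the polynomial hierarchy does not collapse; therefore $\prob$ has no polynomial kernel.

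The step I expect to be the main obstacle is the partition-and-compress move: the entire argument lives or dies on the output $\psi$ having size bounded independently of $t$, and this works precisely because OR-compositionality lets each same-parameter group be collapsed to a single instance \emph{with the parameter unchanged}, so that after kernelization we are left with only $\mathrm{poly}(s)$ instances of size $\mathrm{poly}(s)$ --- without compositionality the $t$ instances could not be merged without the parameter, and hence the kernel size, growing with $t$. A secondary point to get right is that Definition~\ref{def:compositional} supplies disjoint unions only for instances sharing the same parameter, so the cross-group merge cannot be done inside $\prob$; it is instead done at the level of \textsc{Sat}, which is trivially closed under disjunction, and this is the reason we route the kernels back through the reduction $h$ and must assume $\prob\in\np$. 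The remaining ingredients --- the Fortnow--Santhanam distillation lower bound and Yap's theorem --- are used as black boxes; the general composition notion of \cite{BDFH09} (where the composed parameter is $\mathrm{poly}(k)$ rather than exactly $k$) would only add routine bookkeeping to the same argument.
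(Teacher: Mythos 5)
The paper does not give a proof of this theorem; it is cited verbatim from \cite{BDFH09}, and no argument for it appears in the text. Your proof is correct and reproduces the argument underlying that reference: assume a polynomial kernel, convert a batch of \textsc{Sat} instances to instances of \prob, group by parameter value (crucial, since Definition~\ref{def:compositional} only allows composing instances that share the same parameter and keeps that parameter fixed), compose each group by iterated disjoint union, kernelize to make each group's representative small, route back to \textsc{Sat}, and take a disjunction over fresh variables; the Fortnow--Santhanam OR-distillation lower bound then gives $\rm{NP}\subseteq\rm{coNP}/\rm{poly}$. You are also right to flag the membership assumption: the theorem as stated in the paper says only \lq\lq\np-hard,\rq\rq\ but the reduction $h$ back to \textsc{Sat} requires $\prob$ (unparameterized) to lie in \np; this is implicit in \cite{BDFH09}, which works with \np-complete problems, and it holds for every problem the present paper applies the theorem to. So your proof matches the standard one, modulo the cosmetic choice of distilling into \textsc{Sat} itself rather than into the unparameterized version of $\prob$.
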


Moerover, Drucker~\cite{D12} was able to show the following. 

\begin{theorem}[\cite{BDFH09}]\label{thm:compositionalcompl}
Unless $\rm{NP}\subseteq\rm{coNP}/\rm{poly}$, NP-hard AND-compositional problems
do not have polynomial kernels.
\end{theorem}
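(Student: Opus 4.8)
The plan is to build, out of an assumed polynomial kernel, an \emph{AND-distillation} of $\prob$, and then invoke the AND-analogue of the Fortnow--Santhanam ``no distillation'' theorem that underlies Theorem~\ref{thm:compositional}. Suppose for contradiction that $\prob$ is NP-hard, AND-compositional, and admits a polynomial kernelization $\K$. Given $t$ instances $(G_1,k),\dots,(G_t,k)$, each of bit-length at most $n$, form the disjoint union $G = G_1\cup\cdots\cup G_t$; by Definition~\ref{def:compositionalcompl}, $(G,k)\in\prob$ if and only if $(G_i,k)\in\prob$ for every $i$. Applying $\K$ to $(G,k)$ produces in polynomial time a single instance of bit-length $\mathrm{poly}(k)=\mathrm{poly}(n)$ --- \emph{independent of $t$} --- that is a yes-instance of $\prob$ exactly when all $t$ inputs are. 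This is exactly an AND-distillation of $\prob$ into itself. (If the parameters $k_i$ differ, one first applies $\K$ to each input so that the surviving parameters are polynomially bounded, then groups the instances into polynomially many buckets of equal parameter value, AND-composes inside each bucket, and tuples the bucket outputs; this is the same bookkeeping used for the OR case in~\cite{BDFH09}.)

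It remains to rule out AND-distillations of NP-hard languages. One cannot simply pass to the complement $\bar\prob$, which is OR-compositional by Definition~\ref{def:compositionalcompl}, and invoke Theorem~\ref{thm:compositional}: $\bar\prob$ is coNP-hard rather than NP-hard, so the hypothesis of that theorem is unavailable, and the counting argument behind the OR case genuinely breaks when ``some input is a yes-instance'' is replaced by ``all inputs are yes-instances.'' The resolution is the theorem of Drucker~\cite{D12}: if an NP-hard language has an AND-distillation, then $\rm{NP}\subseteq\rm{coNP}/\rm{poly}$. Its proof is information-theoretic --- one runs the distillation on a random tuple consisting of a single ``planted'' instance $x$ placed in a uniformly random coordinate together with $t-1$ ``decoy'' instances drawn from a carefully chosen distribution on no-instances, and shows that the distribution of the distilled output $y$ is statistically distinguishable according to whether $x$ is a yes- or a no-instance; bounded-error amplification plus hard-coded advice then places the complement of the NP-hard language in $\rm{coNP}/\rm{poly}$. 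Chaining this with the first paragraph contradicts the assumption that $\rm{NP}\not\subseteq\rm{coNP}/\rm{poly}$.

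The main obstacle is squarely the distillation lower bound of the second paragraph; the composition-plus-kernel reduction is routine once the parameter-bucketing is in place. Inside Drucker's argument the delicate points are: choosing the decoy distribution on no-instances so that the planted coordinate is actually detectable in $y$; balancing $t$ against the instance length $n$ and the distillation's output length so that the statistical advantage does not wash out as $t$ grows; and turning the resulting statistical test into a clean, deterministic $\rm{coNP}/\rm{poly}$ procedure. A minor side condition is that ``NP-hard'' must be read as NP-hardness of $\prob$ (or a classical slice of it) as an unparameterized language with $k$ polynomially bounded in the input, so that the $\mathrm{poly}(k)$-size distilled instance is genuinely of size $\mathrm{poly}(n)$; for the graph problems considered in this paper, where $k\le|V(G)|$, this holds automatically.
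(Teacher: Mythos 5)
The paper does not prove Theorem~\ref{thm:compositionalcompl} itself; it states it as a known result and explicitly attributes it to Drucker~\cite{D12} (the OR case is from~\cite{BDFH09}, the AND case was left open there). Your proposal reconstructs exactly the intended argument: a polynomial kernel together with AND-compositionality yields an AND-distillation (with the usual parameter-bucketing bookkeeping), and Drucker's theorem rules out AND-distillations of NP-hard languages unless $\rm{NP}\subseteq\rm{coNP}/\rm{poly}$. You also correctly flag that one cannot just complement and reuse the Fortnow--Santhanam/OR bound, since that would require coNP-hardness and the OR counting argument does not survive the dualization --- this is precisely why the AND case remained open until Drucker's work. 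Your sketch of Drucker's information-theoretic argument (random planted coordinate among decoy no-instances, statistical distinguishing of the distilled output, amplification plus advice) is accurate at the level of detail appropriate here, and the side condition that the parameter be polynomially bounded in the input so that a $\mathrm{poly}(k)$-size distilled instance is $\mathrm{poly}(n)$-size is correctly noted. In short, your proposal matches the route the paper defers to by citation.

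One small attributional nit: the theorem environment in the paper cites~\cite{BDFH09}, but the surrounding text correctly credits Drucker~\cite{D12}; your write-up gets the credit right, which is the substantively important thing.
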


We prove in this section that reoptimization versions of some OR-compositional or AND-compositional problems have polynomial kernels. Let us see now which local modifications
will provide these results.

When we talk about graph problems in a reoptimization setting, four local modifications come to mind immediately, namely edge addition and deletion, and vertex addition and deletion.
We now define them formally.

Given a graph $G=(V,E)$, and a pair of non-neighboring vertices $u,v\in V$, we denote an \emph{edge addition} $(V,E\cup\{u,v\})$ by $G+\{u,v\}$, or $G+e$ where $e=\{u,v\}$.
Analogously, for \emph{edge deletion}, given an edge $e\in E$, $G-e$ is the graph $(V,E-\{ e\})$.
Furthermore, for \emph{vertex deletion}, given a vertex $v\in V$, $G-v$ is the subgaph induced by $V-\{v\}$, \ie, $(V-\{ v\}, E')$ where $E'$ is $E$ without the edges incident to $v$. 
Finally, in the case of \emph{vertex addition}, given a new vertex $v$ and a set of edges $E'\subseteq\bigcup_{u\in V}\{u,v\}$, $G+v$ is $G=(V\cup \{v\}, E\cup E')$.
Given a graph problem $\prob$, we call the reoptimization version of $\prob$ under edge addition, edge deletion, vertex addition, and vertex deletion $e^+ $-$\prob$, $e^- $-$\prob$, $v^+ $-$\prob$, and $v^- $-$\prob$, respectively.

We now give an example of a OR-compositional FPT problem that is in PK under reoptimization conditions. We want to see which are the conditions that allow us to find a kernel in this setting.

\subsection{Internal Vertex Subtree}\label{subsec:sptr}

A \emph{subtree} $T$ of a graph $G$ is a (not necessarily induced) subgraph of $G$ which is also a tree. The vertices of a tree can be classified into two categories: \emph{leaves} are vertices of degree 1, and 
\emph{internal} vertices are vertices of higher degree.
Let us consider the following parameterized decision problem called the \sptr problem.
Given a graph $G$ and an integer $k$, we have to determine whether $G$ contains a subtree with at least $k$ internal vertices.

The connected version of this problem, where we consider as input only pairs $(G,k)$ where $G$ is connected, is called \textsc{Maximum Internal Spanning Tree} and has a polynomial kernel of size $3k$ using the crown lemma~\cite{FGST13}, 
and an improved polynomial kernel of size $2k$~\cite{LWCC15}.
However, the general version of this problem does not have a polynomial kernel, unless $\rm{NP}\subseteq\rm{coNP}/\rm{poly}$. 
Let us see this.

\begin{theorem}
 \sptr in general graphs does not have a polynomial kernel unless $\rm{NP}\subseteq\rm{coNP}/\rm{poly}$.
\end{theorem}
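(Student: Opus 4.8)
The plan is to show that \sptr\ is OR-compositional in the sense of Definition~\ref{def:compositional} and $\np$-hard, and then invoke Theorem~\ref{thm:compositional}. The key observation is that a subtree of a disjoint union $G_1\cup G_2$ lives entirely inside one of the two components, since a tree is connected. Hence $G_1\cup G_2$ contains a subtree with at least $k$ internal vertices if and only if $G_1$ does or $G_2$ does. More generally, for a disjoint union of many instances $G_1,\dots,G_t$ (all sharing the same parameter $k$), the union has a subtree with $\ge k$ internal vertices iff at least one $G_i$ does. This is exactly the OR-composition property, with output parameter equal to the common input parameter $k$ (so certainly polynomially bounded in $k$), and the composition algorithm — take the disjoint union — runs in polynomial time in the total input size.

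First I would verify $\np$-hardness of \sptr. This should follow from the $\np$-hardness of deciding whether a connected graph has a spanning tree with at least $k$ internal vertices (equivalently, the \textsc{Maximum Internal Spanning Tree} decision problem, which is $\np$-hard as it generalizes \textsc{Hamiltonian Path}: a Hamiltonian path is a spanning tree with $n-1$ internal vertices, or $n-2$ depending on convention). One has to be a little careful that a subtree with many internal vertices in a \emph{connected} graph need not be spanning; but a short argument shows that if $G$ is connected on $n$ vertices and has a subtree with $k$ internal vertices, then it has a spanning tree with at least $k$ internal vertices (extend the subtree to a spanning tree by adding edges; adding a pendant edge never decreases the internal-vertex count). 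So on connected inputs \sptr\ coincides with \textsc{Maximum Internal Spanning Tree}, which is $\np$-hard, hence \sptr\ is $\np$-hard.

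Second, I would state the composition cleanly: given instances $(G_1,k),\dots,(G_t,k)$ of \sptr, output $(G_1\cup\dots\cup G_t,\,k)$; this is computable in polynomial time, the output parameter is $k$, and $(G_1\cup\dots\cup G_t,k)\in\text{\sptr}$ iff some $(G_i,k)\in\text{\sptr}$, precisely because any subtree of the disjoint union is contained in a single component. Combined with $\np$-hardness and Theorem~\ref{thm:compositional}, this yields that \sptr\ has no polynomial kernel unless $\np\subseteq\mathrm{coNP}/\mathrm{poly}$.

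The only genuinely delicate point is the $\np$-hardness step — specifically, reconciling the ``subtree'' formulation (non-spanning, not necessarily induced) with the known hardness of the spanning version, via the extension argument above. The composition itself is essentially immediate from connectedness of trees, so I expect that to be a one-line verification; the write-up effort goes into the hardness reduction and the subtree-to-spanning-tree observation.
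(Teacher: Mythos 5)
Your proposal is correct and takes essentially the same approach as the paper: show that \sptr is OR-compositional (any subtree of a disjoint union lies in a single component), establish NP-hardness via Hamiltonian path, and invoke Theorem~\ref{thm:compositional}. The paper cites a reference for the Hamiltonian-path reduction rather than spelling out the subtree-to-spanning-tree extension, but the argument is the same.
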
 

\begin{proof}
 Observe first that \sptr is OR-compositional. As required by Definition~\ref{def:compositional}, given two connected graphs $G_1$ and $G_2$, if one of them has a subtree with
 $k$ internal vertices, then the disjoint union of them, \ie, the graph with two connected components $G_1$ and $G_2$ will also have one, the same one that was in 
 $G_1$ or $G_2$. This argument easily extends to arbitrary graphs $G_1$ and $G_2$. As for the reverse implication, if a graph $G$ contains two connected components $G=G_1\cup G_2$ and 
 has such a subtree, then the whole subtree, which is connected, must be contained in one of the components, meaning that either $(G_1,k)\in \sptr$ or  $(G_2,k)\in \sptr$.
 
 Moreover, \sptr is NP-complete (in particular NP-hard). This is because there is a straightforward reduction from Hamiltonian path (see~\cite{OY11}), which is well known to be NP-complete.
 
 Finally, we see that, by Theorem~\ref{thm:compositional}, \sptr does not have a polynomial kernel unless $\rm{NP}\subseteq\rm{coNP}/\rm{poly}$.
\qed \end{proof}

Now, we are going to prove that $e^+$-\sptr has a polynomial kernel.

\begin{theorem}
 $e^+$-\sptr has a polynomial kernel of size $2k$.
\end{theorem}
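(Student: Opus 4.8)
The plan is to use the old solution $s$ together with the inclusion $G\subseteq G+e$ to reduce the question ``$(G+e,k')\in\sptr$?'' to the same question on a single connected component of $G+e$, on which \sptr is nothing but \textsc{Maximum Internal Spanning Tree} and hence admits small kernels. I would first dispose of the trivial case: if $s\neq\bot$, then $s$ is a subtree of $G$, hence of $G+e$, so $G+e$ has a subtree with $\cost(s)\ge k$ internal vertices, and whenever $\cost(s)\ge k'$ — in particular whenever $k'=k$ — we simply output a fixed yes-instance of constant size. In the remaining case we have $(G,k')\notin\sptr$, i.e.\ $G$ has no subtree with $k'$ or more internal vertices: either $s=\bot$, so $(G,k)\notin\sptr$ and $k'\ge k$; or, assuming as usual in reoptimization that $s$ is an optimal solution for $(G,k)$, we have $\cost(s)<k'$ and no subtree of $G$ beats $\cost(s)$.

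Next I would reduce to one connected component. A subtree is connected and therefore lies in a single connected component, so $(G+e,k')\in\sptr$ iff some component of $G+e$ has a subtree with at least $k'$ internal vertices. The components of $G+e$ are exactly those of $G$, except that the new edge $e=\{u,v\}$ either adds an edge inside one component or merges two of them; let $C^{*}$ denote the unique component of $G+e$ containing $e$. Every other component of $G+e$ is also a component of $G$ and hence, since $(G,k')\notin\sptr$, has no subtree with $k'$ or more internal vertices, so $(G+e,k')\in\sptr$ if and only if $(C^{*},k')\in\sptr$. We therefore replace the instance by $(C^{*},k')$, where $C^{*}$ is connected.

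It remains to kernelize the connected instance $(C^{*},k')$. On a connected graph, \sptr coincides with \textsc{Maximum Internal Spanning Tree}, because attaching a new vertex to a tree as a pendant leaf never decreases the number of internal vertices, so any subtree with at least $k'$ internal vertices can be grown to a spanning tree with at least as many, while conversely every spanning tree is a subtree. I would then run the crown-decomposition kernelization for \textsc{Maximum Internal Spanning Tree}~\cite{FGST13,LWCC15} on $(C^{*},k')$ and output the resulting instance. The reoptimization hypothesis is used precisely in the size analysis, which I expect to be the main obstacle: one has to exploit that $C^{*}$ is built from components of $G$ that already contain no subtree with $k'$ internal vertices by adding a single edge — so that $k'$ stays within a constant of $k$ and the crown reduction rules, applied carefully to this structure, leave at most $2k$ vertices rather than the $3k$ of the classical crown-lemma kernel. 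The reductions to the trivial case and to the single component $C^{*}$ are routine.
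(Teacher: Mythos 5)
Your approach matches the paper's: dispose of the case where the old solution already certifies a yes-answer, then reduce to the single connected component $C^{*}$ of $G+e$ containing the new edge (since all other components are unchanged and known to be no-instances), and finally kernelize that connected instance. This is exactly the structure of the paper's proof.

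There is one misconception worth correcting. You frame the final size bound as the ``main obstacle,'' suggesting one must exploit the reoptimization structure of $C^{*}$ to push the crown-lemma $3k$ kernel down to $2k$. That is not needed. The paper simply invokes the fact that \textsc{Maximum Internal Spanning Tree} on connected graphs already has a known $2k$-vertex kernel due to Li, Wang, Chen, and Cao~\cite{LWCC15}, independently of any reoptimization hypothesis. So once you have reduced to the connected instance $(C^{*},k)$, the $2k$ bound is a black-box citation, not a new analysis. Your observation that any subtree of a connected graph can be grown to a spanning tree with at least as many internal vertices (so \sptr and \textsc{Maximum Internal Spanning Tree} agree on connected inputs) is the correct justification for this citation and is implicitly used in the paper. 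The rest of your reasoning — in particular the careful treatment of $k'$ versus $k$ and the observation that the old solution carries over to $G+e$ because \sptr is comonotone — is sound and slightly more explicit than the paper's version, which simply takes $k'=k$.
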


\begin{proof}
 Let us consider an instance $((G,k),T, (G+e,k))$ for $e^+$-\sptr. Recall that $T$ is a subtree of $G$ with at least $k$ internal vertices (\ie, a solution for $(G,k)$) if it exists
 or a trivial tree $T$ (with less than $k$ internal vertices) otherwise.
 The following procedure gives a kernel of size $2k$ for the modified input $(G+e,k)$.
 
 If $T$ is a subtree with at least $k$ internal vertices, then $T$ is also a valid solution for $G+e$, thus any instance where $(G,k)\in \sptr$ implies immediately that $((G,k),T, (G+e,k))\in e^+-\sptr$,
 thus any trivial instance $(H,k)\in \sptr$ of size $\le 2k$ is a kernel for $e^+$-\sptr.
 
 On the other hand, if $(G,k)$ contains no such tree, then it suffices to check for $(G+e,k)$, whether the connected component containing the edge $e$
 has such a subtree. 
 Because any other connected component of $G+e$ is identical to a component in $G$, and we know that those components do not 
 contain any subtree with at least $k$ internal vertices.
 This means that $(G+e,k)\in\sptr$ if and only if the connected component containing $e$ has a subtree with at least $k$ internal vertices. And thus, a kernel for this component is equivalent to a kernel of the
 whole instance.
 As we know that a $2k$ kernel exists for the connected case~\cite{LWCC15}, we can obtain one such kernel for the connected component containing $e$, thus we have provided a kernel of size $2k$ for
 $((G,k),T,(G+e,k))$.
\qed \end{proof}

This shows that $e^+$-\sptr is in PK.
Observe that, in this case, we would be able to find a kernel for the modified instance by using the same procedure, even if we were given only a Yes/No answer or a polynomial kernel
instead of a solution for the non-modified instance.
This is because given an instance $(G,k)$ for \sptr, if  we are guaranteed this instance has a  subtree with $k$ internal vertices, then for sure $(G+e,k)$ also has one, on the other hand,
if we are guaranteed that $(G,k)$ does not have such a subtree, then if one should exist for $(G+e,k)$, it would be found in the component that contains $e$, and thus we could build a kernel for 
that component. In the case we are given just an instance $(G,k)$ and a polynomial kernel for this instance, the way to build a kernel for $(G+e,k)$ is just to build a kernel for the component that contains
$e$, and give as polynomial kernel for $(G+e,k)$ the kernel obtained by taking a disjoint union of both kernels. 
We can find through the first kernel 
if $(G,k)$ has a 
subtree with $k$ internal vertices and in this case second kernel is not relevant, otherwise, we can look at the second kernel to determine if the component containing
the edge $e$ has a spanning tree with $k$ internal vertices, thus solving the instance $(G+e,k)$.

\subsection{Generalization}

To begin with, we observe that, in order for a problem to be solvable in an analogous way to the problem above, it is important that the property defining the problem is maintained under the local modification considered.
For instance, a subtree of a graph $G$ is also a subtree of the same graph with an added edge, $G+e$. However, the same does not hold for edge deletion, because the deleted edge might be part of 
the chosen subtree for $G$. In order to formalize this, we define the following:

\begin{definition}[Monotone Graph Problem]
 A graph problem $\prob$ is called \emph{monotone} if it is closed under removal of edges and vertices. 
 That is, if an instance $(G,k)\in\prob$, then $(G-e,k)\in \prob$ for every $e\in E$ and $(G-v,k)\in \prob$ for every $v\in V$.
\end{definition}

\begin{definition}[Comonotone Graph Problem]
 Similarly, a graph problem $\prob$ is called \emph{comonotone} if it is closed under addition of edges and vertices.
\end{definition}

We see in this subsection how to construct polynomial kernels for the
reoptimization versions of some compositional graph problems that are monotone or comonotone and that are not in PK.

We realize that, in order to get similar results as in the examples above, we need the following conditions. Let $\prob$ be a graph problem;
\begin{enumerate}
 \item $\prob$ is compositional and NP-hard. 
 \item Any instance of the parameterized problem $\prob$ has a polynomial kernel on the connected component of a given vertex or edge, or
 an instance of the rooted version of the problem $\prob^*$ has a polynomial kernel (informally, in a rooted version of a problem,
 any given instance contains a distinguished vertex, the root, and the solution must contain
 this vertex).
 \item The problem is monotone or comonotone.
\end{enumerate}

The first condition ensures that the considered problem does not have a polynomial kernel unless $\rm{NP}\subseteq\rm{coNP}/\rm{poly}$,
which makes results on reoptimization interesting. 
The second condition allows us to find kernels locally.
The third condition allows the modification to only affect the solution locally, whereas other modifications
could potentially require to look at the whole instance for a solution. Let us formalize this.

We define the \emph{environment} of an edge $e$ or a vertex $v$ in a graph $G+e$ or $G+v$,
as the connected component that contains $e$ or $v$.
For edge and vertex deletions, we say that the environment of $e$ or $v$ in a graph $G-e$ or $G-v$ are the connected components that
are modified or generated when $e$ or $v$ is deleted from $G$.

Given an instance $(G,k)$ for a parameterized problem $\prob$ on graphs whose solution can be described by a subset of
vertices, an instance of the rooted version $\prob^*$ of the problem 
is a triplet
$(G,v,k)$ where $(G,v,k)\in \prob^*$ if and only if $(G,k)\in \prob$ and there exists a solution containing $v$.
%!!!!!!!!!!!!!!!!!!!!!!!!!!!!!!!!!!!!!!!!!!!!!!!!!
We will say that a kernel $(G',k')$ for an instance $(G,k)$ is a \emph{$v$-rooted} kernel if it is a kernel for $(G,v,k)$ for the rooted problem $\prob^*$,
\ie, such that $(G',k')\in \prob$ if and only if $(G,k)\in \prob$ and has a solution (represented by a subset of vertices and edges) that contains $v$.
%fix this when definition for solution is setup %!!!!!

Finally, given an instance $(G,k)\in \prob$ for a graph problem \prob whose solution can be described by a subset of vertices, we say that the solution
$S\subseteq V$ to $(G,k)$ is a \emph{witness solution} if the subgraph $H$ induced by $S$ is an instance for \prob and $(H,k)\in \prob$, and moreover
$S$ is a solution for any supergraph $G'$ for which $H$ is an induced subgraph.
Essentially, we require all of the vertices which are necessary for the solution to be valid, to be part of the solution subset, and we require that the solution keeps being valid
for any supergraph. This last requirement is automatically satisfied in comonotone graph problems, which would allow us to relax the definition, but is needed in the case of monotone
graph problems.

We are now ready to formally state the theorems, which generalize the results we have for \sptr.

\begin{theorem}\label{thm:orcompmon}
 Let $\prob$ be a parameterized NP-hard OR-compositional monotone graph problem.
 If, for every instance $(G,k)$, we can compute a polynomial kernel for an environment of any edge $e\in E$
 or if there exist witness solutions for instances in \prob and we can compute a rooted polynomial kernel for any vertex $v\in V$, then $e^-$-$\prob$ is in PK.
 Moreover, if for every instance $(\hat{G},\hat{k})$, we can compute a polynomial kernel for an environment of any vertex $v\in \hat{V}$,
 then $v^-$-$\prob$ is in PK.
\end{theorem}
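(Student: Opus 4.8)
The plan is to follow the template of the proof for $e^+$-\sptr above, with monotonicity now playing the role that comonotonicity played there. Fix an instance $((G,k),s,(G-e,k'))$ of $e^-$-$\prob$; the goal is to output in polynomial time a polynomial kernel for $(G-e,k')$. First I would decompose $G-e$ into connected components. Let $D$ be the component of $G$ containing $e$ and let $C_1,\dots,C_m$ be the remaining components of $G$; the $C_i$ are untouched by the deletion and remain components of $G-e$, while $D$ turns into the \emph{environment} $\mathcal{E}$ of $e$ — namely $D$ itself if $e$ lies on a cycle, or the two pieces obtained by cutting the bridge $e$ otherwise. Thus $G-e=C_1\cup\dots\cup C_m\cup\mathcal{E}$, and by OR-compositionality $(G-e,k')\in\prob$ if and only if some $C_i$ is in $\prob$ or one of the at most two components of $\mathcal{E}$ is in $\prob$.

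The crucial step is to collapse the untouched part $C_1,\dots,C_m$ to a single bit using the supplied data $s$, and this is exactly where monotonicity enters. If $s=\bot$, then $(G,k)\notin\prob$, so by OR-compositionality no component of $G$ — in particular no $C_i$ — lies in $\prob$; hence the $C_i$ contribute nothing and $(G-e,k')\in\prob$ iff $\mathcal{E}$ does. If $s\neq\bot$ and $m\geq 1$, I would instead delete from $G$ all vertices of $D$: monotonicity gives $(C_1\cup\dots\cup C_m,k)\in\prob$, so by OR-compositionality some $C_i$ is in $\prob$, and since that $C_i$ is also a component of $G-e$ the modified instance is already a yes-instance and we may output a fixed trivial yes-instance of constant size as the kernel. (If $m=0$, i.e.\ $G$ is connected, there is nothing to collapse and $G-e=\mathcal{E}$.) In every remaining case the untouched components are irrelevant and it suffices to kernelize the at most two connected graphs forming $\mathcal{E}$.

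To kernelize $\mathcal{E}$ I would invoke the kernelization hypothesis: either directly, applying the assumed polynomial kernelization for the environment of an edge to $(G-e,k')$ and taking the disjoint union of the resulting (at most two) kernels; or, in the branch where only a rooted polynomial kernel is available, rooting the relevant component of $\mathcal{E}$ at an endpoint $u$ of $e$ — since a solution that becomes available only upon re-adding $e$ must pass through $u$, the witness-solution property guarantees that the $u$-rooted kernel faithfully records whether $\mathcal{E}$ is a yes-instance. In both cases the output has size polynomial in $k'$, and by the case analysis above together with OR-compositionality it decides $(G-e,k')$, so $e^-$-$\prob$ is in PK. The statement for $v^-$-$\prob$ follows by repeating the whole argument with ``edge $e$'' replaced by ``vertex $v$'': deleting $v$ from $\hat G$ modifies only the component(s) of $\hat G$ meeting $v$ — the environment of $v$ — every other component stays intact, the monotonicity-plus-$s$ argument collapses the untouched part exactly as before, and the corresponding hypothesis now supplies a polynomial kernel for the environment of $v$.

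I expect the main obstacle to be the middle step: arguing carefully that $s$ (or $\bot$) genuinely lets us discard the untouched components without inflating the kernel — which includes checking that the parameter bookkeeping between $k$ and $k'$ works out in the direction forced by the reoptimization model — and, in the rooted branch, verifying that a kernel rooted at an endpoint of the modified edge correctly captures membership. These are precisely the places where monotonicity and the witness-solution notion do the real work.
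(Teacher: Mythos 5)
Your proposal is correct and follows essentially the same strategy as the paper's proof: output a trivial yes-kernel when $(G,k)\in\prob$ (the paper does this directly via monotonicity, whereas you derive it slightly more circuitously by deleting the whole component of $e$ and invoking OR-compositionality, and you needlessly fall back to kernelizing the environment when $m=0$), and when $(G,k)\notin\prob$ use OR-compositionality to localize the question to the at-most-two components of the environment and kernelize those, with the witness-solution argument showing that a $u$-rooted kernel at an endpoint of $e$ suffices. The only blemish is the phrase ``a solution that becomes available only upon re-adding $e$,'' which is stated backwards — the relevant solutions are those available in $G-e$ but not in $G$, which must contain an endpoint of $e$ precisely because a witness solution avoiding both endpoints would induce the same subgraph in $G$, making $(G,k)$ a yes-instance and contradicting the case assumption — but the intended argument is clear from context and matches the paper.
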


In the same way, we can state a similar theorem for comonotone graph problems with the complementary reoptimization steps, 
namely edge and vertex addition.

\begin{theorem}\label{thm:orcompcomon}
 Let $\prob$ be a parameterized NP-hard OR-compositional comonotone graph problem.
 If, for every instance $(G,k)$, we can compute a polynomial kernel for an environment of any edge $e\in E$
 or if there exist witness solutions for instances in \prob and we can compute a rooted polynomial kernel for any vertex $v\in V$, then $e^+$-$\prob$ is in PK.
 Moreover, if for every instance $(\hat{G},\hat{k})$, we can compute a polynomial kernel for an environment of any vertex $v\in \hat{V}$
 or a polynomial kernel rooted to $v$ for any $v\in \hat{V}$, then $v^+$-$\prob$ is in PK.
\end{theorem}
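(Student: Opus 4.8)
The plan is to transcribe the $e^+$-\sptr argument, replacing the concrete connected kernel by the hypotheses of the theorem and running the edge- and vertex-addition cases side by side. Fix a reoptimization instance $((G,k),s,(G+e,k'))$ (respectively $((G,k),s,(G+v,k'))$). As in the \sptr proof, it suffices to output, in polynomial time, an instance of \prob equivalent to the modified instance and of size polynomial in the parameter; the bookkeeping relating the two parameters $k$ and $k'$ is routine and handled exactly as in the \sptr proof. The algorithm branches on whether $s=\bot$.

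If $s\neq\bot$, that is $(G,k)\in\prob$, then, since \prob is comonotone, adding the edge $e$ (or the new vertex $v$ together with its incident edges) preserves membership, so $(G+e,k)\in\prob$ (resp.\ $(G+v,k)\in\prob$); hence the modified instance is a yes-instance, and any yes-instance of \prob of size polynomial in the parameter is a correct kernel. Such a small yes-instance exists --- by OR-compositionality some connected component of $G$ is already a yes-instance, and kernelizing it (with the environment kernel, or, if only the rooted version is available, with the kernel rooted at a vertex of one of its witness solutions) produces one --- and for the concrete problems under consideration it is simply a canonical gadget, e.g.\ a path on $2k$ vertices for \sptr. This branch is the straightforward one.

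The interesting branch is $s=\bot$, i.e.\ $(G,k)\notin\prob$. Then OR-compositionality forces every connected component of $G$ to be a no-instance. Adding the edge $e=\{u,v\}$ leaves all components of $G$ unchanged except that the environment $E_e$ of $e$ appears (one component gains the edge $e$, or two components merge through it); adding the vertex $v$ leaves all components unchanged except that the environment $E_v$ of $v$ is created. Applying OR-compositionality over the components of the modified graph and discarding the unchanged ones, which are no-instances, gives $(G+e,k)\in\prob\iff(E_e,k)\in\prob$ (resp.\ $(G+v,k)\in\prob\iff(E_v,k)\in\prob$). If the hypotheses let us kernelize this environment directly, we output that kernel and are done. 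I expect the point requiring care to be the case where only a \emph{rooted} kernel is available: there I would show that, because $(G,k)\notin\prob$ and \prob admits witness solutions, $(E_e,k)\in\prob$ if and only if $(E_e,u,k)\in\prob^*$, and similarly $(E_v,k)\in\prob$ iff $(E_v,v,k)\in\prob^*$. The nontrivial implication is ``$\Rightarrow$'': pick a witness solution $S$ of the environment; were $S$ to avoid $u$ (resp.\ $v$), the subgraph it induces in the environment would coincide with the one it induces in $G$ --- the newly added edge, resp.\ vertex, not appearing in it --- so $S$, being a witness solution, would be a solution of $G$, contradicting $(G,k)\notin\prob$; therefore $S$ contains $u$ (resp.\ $v$) and certifies the rooted instance. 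Given this equivalence, the $u$-rooted (resp.\ $v$-rooted) kernel of the environment is a kernel of the modified instance, and we output it.

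In either branch the output is computed in polynomial time, has size polynomial in the parameter, and is a yes-instance of \prob exactly when the modified instance is; it is therefore a polynomial kernel, so $e^+$-$\prob$ (resp.\ $v^+$-$\prob$) is in PK. The only genuinely new ingredients compared with the \sptr case are the reduction ``environment $=$ whole modified instance'' (OR-compositionality together with comonotonicity) and the reduction ``unrooted environment $=$ rooted environment'' (the witness-solution property); the rest is a rewriting of that proof.
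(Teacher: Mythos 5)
Your proposal is correct and, modulo presentation, follows the same route as the paper's proof (the paper gives the argument for the monotone case of Theorem~\ref{thm:orcompmon} and declares the comonotone case analogous; you spell that analogous case out). Both proofs branch on whether $(G,k)$ is a yes- or no-instance, use comonotonicity to discharge the yes-branch with a trivial yes-kernel, use OR-compositionality in the no-branch to localize the question to the environment of the modification, and then invoke either the environment-kernel hypothesis directly or the rooted-kernel hypothesis together with the witness-solution property; your contradiction argument (a witness solution avoiding an endpoint of $e$, respectively $v$, would induce a subgraph already present in $G$, certifying $(G,k)\in\prob$) is exactly the paper's argument transcribed to the addition setting.

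Two small remarks. First, in the no-branch you phrase the localization step as using ``OR-compositionality together with comonotonicity,'' but only OR-compositionality is actually used there (comonotonicity is what makes the yes-branch trivial); this is a wording slip, not a gap. Second, you are slightly more careful than the paper in two places: you explicitly justify that a polynomial-size yes-instance can always be produced in the yes-branch (by kernelizing a yes-component via the hypothesized environment or rooted kernel), and you make explicit that for the vertex-addition rooted-kernel case the witness-solution assumption is still needed, which the theorem statement leaves implicit; both are improvements in precision rather than deviations in method.
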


% \begin{theorem}\label{thm:eplus}
%  Let $\prob$ be a parameterized OR-compositional graph problem whose complement is monotone.
%  If for every instance $(G,k)$, we can compute a polynomial kernel for an environment of any edge $e\in E$
%  or we can compute a rooted polynomial kernel for any vertex $v\in V$, then $e^+$-$\prob$ is in PK.
% \end{theorem}

We state a proof for Theorem~\ref{thm:orcompmon} in the case of edge deletion and the rest of the cases will be proven by analogy to it.

\begin{proof}[of Theorems~\ref{thm:orcompmon} and \ref{thm:orcompcomon}]
 If an instance $(G,k)$ for a monotone parameterized graph problem \prob is a yes instance then, we can construct a trivial yes-kernel for $(G-e,k)$. 
 
 Otherwise, $(G,k)\notin \prob$. 
 It is important to observe, that in case of an edge deletion, given an instance $G$ and an edge $e$, the environment of $e$ might contain two connected components.
 
 If the considered problem $\prob$ has polynomial kernels for an environment of an edge, it will have a polynomial kernel for
 the graph $G-e$ because, if we take one polynomial kernel for each component adjacent to $e$,
 the size of the union of these kernels
 is still polynomial in $k$.
 Moreover, none of the other components
 are modified, thus any solution found for $G-e$ must be in one of the newly generated components. Thus making the generated kernel, a valid kernel for $(G-e,k)$.
 
 On the other hand, if for any instance of the considered problems we can compute a kernel for any vertex $v \in V$, we argue that a kernel rooted in a vertex adjacent to 
 $e$ is a kernel for $(G-e, k)$ in this case.
 Let us assume that $(G-e,k)$ has a witness solution $S'$ that does not contain the vertices adjacent to $e$, this would mean that the subgraph $H$ induced by $S'$ has $G$ as a supergraph.
 By the definition of witness solution, if $G$ is a supergraph of $H$, then $(G,k)\in \prob$, thus contradicting the assumption that  $(G,k)\notin \prob$. 
 
 The cases for vertex deletion and edge and vertex addition in comonotone graph problems are completely analogous, which proves
 Theorems~\ref{thm:orcompmon} and \ref{thm:orcompcomon}.\qed
 \end{proof}
 
 In the case of vertex deletion, the number of newly generated connected components can be as high as the degree of the deleted vertex $v$. It is important to point out
 that, in this case, in order to have a polynomial kernel for an environment, it might not be enough that $\prob$ 
 is in PK if restricted to connected instances (which was true for the previous cases).
 This is also the reason why, if a monotone problem has a rooted kernel, the theorem still does not hold up for $v^-$-$\prob$, as we would need to make sure that the deleted vertex has restricted degree, too.
 For any vertex with degree superpolynomial in $k$, even the existence of rooted kernels for all of its neighbors would not provide us with a polynomial kernel for 
 $G-v$.

In general, polynomial kernels cannot be built for OR-compositional hard problems because one might have a lot of connected components and one can only build a polynomial kernel for each connected component.
In fact, in the next section, we will see that in general some of the reoptimization versions of OR(or AND)-compositional problems do not have polynomial kernels.
% With problems based on monotone properties and their analogous local modifications, one can ensure that existing solutions will stay and instances without solutions will only require looking at the newly generated connected components.
% This is not true if the properties and modifications go in opposite directions. For example, if the local modification was decreasing and the property monotonically increasing, given an instance with a solution, 
% we would need to make sure that the deletion did not destroy the considered solution, moreover there would also be the possibility that an alternative solution could be found in any other component if the modified instance does not admit the 
% same solution anymore. Thus, we would require to look at the whole graph, for which maybe no polynomial kernel exists. INTRO FOR NEXT SECTION!!!!!!!

If we now come back to the \sptr problem that we saw in Subsection~\ref{subsec:sptr}, we realize that not only the conditions are satisfied to apply Theorem~\ref{thm:orcompcomon} for an edge addition, 
but also to apply it to a vertex addition. Thus, we conclude that $v^+$-\sptr is in PK.

A problem where we can find a rooted kernel is \outtree, known sometimes in the literature as $k$-\outtree. Given a directed graph $D$ and an integer $k$, we are asked to compute a tree in $D$ with at least $k$ leaves. 
We first observe that this problem is comonotone.
Moreover, the rooted version of this problem is in PK with a quadratic kernel as was seen by Daligault and Thomassé~\cite{DT09}, furthermore
\outtree has no polynomial kernel unless $\rm{coNP}\subseteq\rm{NP}/\rm{poly}$ as pointed out by Fernau et al.~\cite{FFLRSV09}, due to the fact that the problem is OR-compositional and NP-hard. 
Then, applying Theorem~\ref{thm:orcompcomon}, we deduce that $e^+$-\outtree and $v^+$-\outtree are in PK.

Yet another problem that falls into this category is \cli on graphs of maximum degree $d$ ($d$-\cli), this problem has a polynomial kernel for the rooted case. It is OR-compositional and does not have a polynomial kernel in the general case
as observed by Hermelin et al.~\cite{HK13}, thus we deduce that we can apply Theorem~\ref{thm:orcompcomon} and thus $e^+$-$d$-\cli and $v^+$-$d$-\cli are in PK.

% If we think about any instance $(G,k)$ for $d$-\cli and we take the complement, $(G^c,k)$ is an instance for the complementary problem $d$-\is. Given a graph, where every vertex $v$ has degree at least $n-d$, where $n$ is the size of the 
% of $G$ that contains $v$, we are asked to find an independent set of size at least $k$. Trivially, this problem has a polynomial kernel for the rooted case

To sum it up we have the following corollary:

\begin{corollary}\label{cor:orexamples}
 The following problems are in PK:
 \begin{itemize}
  \item $v^+$-\sptr
  \item $e^+$-\outtree and $v^+$-\outtree
  \item $e^+$-$d$-\cli and $v^+$-$d$-\cli
 \end{itemize}

\end{corollary}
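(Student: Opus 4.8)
The plan is to treat each item as a direct specialization of Theorem~\ref{thm:orcompcomon}, so the work reduces to checking, for each of the three problems, that: (i) it is NP-hard and OR-compositional; (ii) it is comonotone; and (iii) either the environment of an added edge or vertex admits a polynomial kernel, or the rooted version admits a polynomial kernel and there exist witness solutions. Once (i)--(iii) hold for a given problem, the edge-addition conclusion and the ``moreover'' vertex-addition conclusion of Theorem~\ref{thm:orcompcomon} deliver the stated membership in PK.

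For \sptr, (i) and (ii) are immediate: NP-hardness and OR-compositionality were established in Subsection~\ref{subsec:sptr}, and the problem is comonotone because a subtree of $G$ is still a subtree of $G+e$ and of $G+v$. For (iii), I would use that the connected special case \textsc{Maximum Internal Spanning Tree} has a kernel of size $2k$~\cite{LWCC15}; since the environment of an added edge in $G+e$, and the environment of an added vertex in $G+v$, is in each case a single connected component, that environment can be kernelized to size $2k$. Hence Theorem~\ref{thm:orcompcomon} applies for both edge and vertex addition; in particular $v^+$-\sptr is in PK (this also re-derives, via the general theorem, the direct $e^+$-\sptr result of Subsection~\ref{subsec:sptr}).

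For \outtree and $d$-\cli, comonotonicity is again clear (an out-tree with $k$ leaves in $D$ remains one in $D+e$ and $D+v$; a $k$-clique in $G$ remains a $k$-clique in $G+e$ and $G+v$), and NP-hardness together with OR-compositionality is the observation of Fernau et al.~\cite{FFLRSV09} for \outtree and of Hermelin et al.~\cite{HK13} for $d$-\cli. Here I would take the \emph{rooted} route of (iii): the rooted version of \outtree has a quadratic kernel~\cite{DT09}, and the rooted version of $d$-\cli has a polynomial (indeed constant-size) kernel, since any clique through a fixed vertex lies inside its closed neighbourhood of at most $d+1$ vertices. Witness solutions exist in both cases: take $S$ to be the vertex set of the out-tree, respectively the clique itself; the subgraph induced by $S$ still contains the required structure, and so does any supergraph inducing it. Applying Theorem~\ref{thm:orcompcomon} then yields $e^+$-\outtree, $v^+$-\outtree, $e^+$-$d$-\cli and $v^+$-$d$-\cli in PK.

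The only genuinely delicate point is a bookkeeping one: a reoptimization step must not push the instance outside its class — for $d$-\cli an edge or vertex addition must keep the maximum degree at most $d$, and the rooted kernelizations quoted above are stated for members of the respective classes. This is harmless in the reoptimization model, because the modified instance $(x_{\lmod},k')$ is by definition required to be a legal instance of \prob, so any such degree promise is already part of the input; there is no further obstacle, and the corollary is precisely the instantiation of Theorem~\ref{thm:orcompcomon} at these three problems.
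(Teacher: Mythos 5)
Your proof is correct and follows essentially the same route as the paper: it instantiates Theorem~\ref{thm:orcompcomon} at each problem after checking its hypotheses --- the environment route for \sptr (via the $2k$-kernel of Li et al.~\cite{LWCC15}), and the rooted route for \outtree (quadratic rooted kernel~\cite{DT09}) and for $d$-\cli (closed-neighbourhood kernel) --- exactly as the paper does in the text preceding the corollary. Your extra remark that the reoptimization step must keep a $d$-\cli instance degree-bounded is a small clarification the paper leaves implicit, and it is resolved correctly by the definition of a reoptimization instance.
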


Now, we can think about the complementary version of the problems described, where a property is required in every component in order for a solution to exist, \ie, AND-compositional problems.
Again, we can state a pair theorems for AND-compositional problems analogous to Theorems~\ref{thm:orcompmon} and \ref{thm:orcompcomon} based on the considered local modifications.

\begin{theorem}\label{thm:andcompmon}
 Let $\prob$ be a parameterized NP-hard AND-compositional monotone graph problem.
 If, for every instance $(G,k)$, we can compute a polynomial kernel for an environment of any edge $e\in E$, 
%  or if there exist witness solutions for instances in \prob and we can compute a rooted polynomial kernel for any vertex $v\in V$, 
 then $e^+$-$\prob$ is in PK.
 Moreover, if for every instance $(\hat{G},\hat{k})$, we can compute a polynomial kernel for an environment of any vertex $v\in \hat{V}$,
%  or a polynomial kernel rooted to $v$ for any $v\in V$, 
then $v^+$-$\prob$ is in PK.
\end{theorem}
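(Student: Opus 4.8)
The plan is to mirror the proof of Theorem~\ref{thm:orcompmon}, using the duality between AND- and OR-composition together with the fact that $e^+$ is the modification opposite to the one ($e^-$) under which a monotone problem is closed. Given a reoptimization instance $((G,k),s,(G+e,k'))$ of $e^+$-$\prob$, I would first read off from $s$ — namely, whether $s=\bot$ — whether $(G,k)\in\prob$, and split into two cases. If $(G,k)\notin\prob$, monotonicity settles the modified instance at once: $G=(G+e)-e$ is a subgraph of $G+e$, so $(G+e,k)\in\prob$ would force $(G,k)\in\prob$, a contradiction; hence $(G+e,k)\notin\prob$, and a no-instance stays a no-instance as the parameter moves in the direction allowed by the reoptimization model, so $(G+e,k')\notin\prob$ and any fixed constant-size no-instance of $\prob$ is a valid kernel. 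This mirrors the ``trivial yes-kernel'' step of Theorem~\ref{thm:orcompmon}.

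If $(G,k)\in\prob$, AND-composition, applied inductively over the connected components, says that every component of $G$ is a yes-instance with parameter $k$. Exactly one component of $G+e$ differs from the components of $G$, namely the environment $C$ of $e$ — either $G$'s component containing both endpoints of $e$ with $e$ added, or the union of the two components containing the endpoints of $e$ with $e$ added — while every other component $D_1,\dots,D_m$ of $G+e$ is an untouched component of $G$ and hence a yes-instance. Applying AND-composition to $G+e$ then gives $(G+e,k)\in\prob$ if and only if $(C,k)\in\prob$, the conjuncts contributed by $D_1,\dots,D_m$ being automatically true. So it suffices to kernelize the single instance $(C,k)$, which the hypothesis permits (a polynomial-in-$k$ kernel for an environment of an edge); we output this kernel, and it is a correct kernel for the modified instance since $(G+e,k)\in\prob\iff(C,k)\in\prob$. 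The vertex-addition case is word for word analogous: $G=(G+v)-v$ is the subgraph of $G+v$ induced by $V$, so monotonicity disposes of the case $(G,k)\notin\prob$, and when $(G,k)\in\prob$ the environment of $v$ — the union of all $G$-components meeting $N(v)$, together with $v$ and its incident edges — is the only affected component, which we kernelize using the vertex-environment hypothesis.

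The hard part is controlling the parameter under the modification. The argument is airtight when $k'=k$; if $k'$ differs from $k$, the untouched components $D_1,\dots,D_m$, which are yes-instances with parameter $k$, need not be yes-instances with the modified parameter $k'$, and the single-environment hypothesis provides no way to decide $\bigwedge_{i=1}^m(D_i,k')\in\prob$. Since composition is phrased with a common parameter and the local modifications relevant here keep $k'=k$, I would state the theorem in that regime; with this understood, the two cases above produce a polynomial kernel for the modified instance, so $e^+$-$\prob$ and $v^+$-$\prob$ are in PK.
\qed
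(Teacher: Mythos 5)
Your argument is essentially the paper's: when $(G,k)\notin\prob$, monotonicity shows $(G+e,k)\notin\prob$ and a trivial no-kernel suffices; when $(G,k)\in\prob$, AND-composition (applied component-wise) reduces the question to the environment of the modification, which is kernelized by hypothesis, with vertex addition handled analogously. Your remark that the argument as written requires $k'=k$ is accurate and flags an ambiguity in the reoptimization model that the paper's proof silently glosses over, but it does not change the approach.
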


\begin{theorem}\label{thm:andcompcomon}
 Let $\prob$ be a parameterized NP-hard AND-compositional comonotone graph problem.
 If, for every instance $(G,k)$, we can compute a polynomial kernel for an environment of any edge $e\in E$, 
% or if there exist witness solutions for instances in \prob and we can compute a rooted polynomial kernel for any vertex $v\in V$, 
then $e^-$-$\prob$  is in PK.
 Moreover, if for every instance $(\hat{G},\hat{k})$, we can compute a polynomial kernel for an environment of any vertex $v\in \hat{V}$, 
 then $v^-$-$\prob$  is in PK.
\end{theorem}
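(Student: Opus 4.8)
The plan is to mirror the proof of Theorems~\ref{thm:orcompmon} and~\ref{thm:orcompcomon}, swapping the roles of monotonicity and comonotonicity and replacing the OR-composition step by its AND-analogue; the latter, stated for two components in Definition~\ref{def:compositionalcompl}, lifts to all connected components of a graph by a straightforward induction, so that $(G,k)\in\prob$ holds if and only if every connected component of $G$ is a yes-instance for the same parameter $k$. I will carry out the edge-deletion case; the vertex-deletion case is the same argument with ``edge $e$'' replaced by ``vertex $v$''. As in the earlier proofs I treat the modified parameter $k'$ as $k$, the bound $k'\in f(k)$ turning the remaining discrepancy into routine bookkeeping.

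First I would dispose of the easy case. Let $((G,k),s,(G-e,k'))$ be an instance of $e^-$-$\prob$. Since $\prob$ is comonotone and $G$ is obtained from $G-e$ by adding the edge $e$, we have $(G-e,k)\in\prob \Rightarrow (G,k)\in\prob$, equivalently $(G,k)\notin\prob \Rightarrow (G-e,k)\notin\prob$; hence whenever $s=\bot$ we output a fixed trivial no-instance as the kernel. The remaining case is $(G,k)\in\prob$, so $s$ is a genuine solution and, by AND-composition, every component of $G$ is a yes-instance for $k$. Deleting $e$ affects only the environment $H$ of $e$ (the at most two components produced by the deletion); every other component of $G-e$ coincides with a component of $G$ and is therefore still a yes-instance. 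Applying AND-composition to $G-e$ gives $(G-e,k)\in\prob \iff (H,k)\in\prob$. By hypothesis I can compute a polynomial kernel $(H',k'')$ of the environment $H$, and since $(H',k'')\in\prob \iff (H,k)\in\prob \iff (G-e,k)\in\prob$, this $(H',k'')$ is the desired polynomial kernel for the modified instance. For the vertex-deletion case the environment of $v$ may consist of up to $\deg(v)$ components, but this is harmless because the hypothesis already provides a single polynomial kernel for the whole environment rather than one per component.

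The step that needs real care --- and the reason the rooted-kernel alternative of Theorems~\ref{thm:orcompmon}/\ref{thm:orcompcomon} does not reappear here --- is the passage from ``the environment is a yes-instance'' to ``$G-e$ is a yes-instance''. For an AND-compositional problem the witnessing structure is genuinely distributed over all components, so, unlike in the OR case, no single vertex adjacent to $e$ (or $v$) can localize a potential solution, and one really needs both that $s$ certifies $(G,k)\in\prob$ (making the untouched components unproblematic) and a kernel for the entire environment. Once this is in place the only remaining checks are that the AND-composition identity is applied with the same parameter on both sides (it is, by Definition~\ref{def:compositionalcompl}) and that, in the edge case, the union of the at most two environment kernels is still of size polynomial in $k$ --- both routine. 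I expect the bookkeeping around $k'$ versus $k$ to be the most tedious part of a full write-up, but not a genuine obstacle.
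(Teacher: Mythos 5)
Your proof is correct and follows essentially the same approach as the paper's: handle the no-case via comonotonicity (yielding a trivial no-kernel), and in the yes-case use the AND-composition identity to reduce to the environment of the deleted edge or vertex, for which the hypothesis supplies a polynomial kernel. The paper proves the monotone/edge-addition statement (Theorem~\ref{thm:andcompmon}) in detail and declares the comonotone cases analogous; you have simply written out that analogous argument, together with the correct remark that the rooted-kernel alternative is absent here because an AND-composed yes-witness is distributed over all components.
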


% \begin{theorem}\label{thm:eminuscompl}
%  Let $\prob$ be a parameterized AND-compositional graph problem based on an increasingly monotone property $P$.
%  If, for every instance $(G,k)$, we can compute a polynomial kernel for an environment of any edge $e\in E$, then $e^-$-$\prob$ is in PK.
% \end{theorem}
The proofs of these theorems are analogous to the ones for Theorems \ref{thm:orcompmon} and \ref{thm:orcompcomon}. We again prove one of the statements
and the rest are proven analogously.

\begin{proof}
 Given a solution to an instance $(G,k)$ for a monotone problem $\prob$, we find a polynomial kernel for an instance $(G+e,k)$ as follows.
  
 If $(G,k)\notin \prob$, then for sure $(G+e,k)$ does not have a solution. If $(G+e,k)$ was in $\prob$, then
 any solution for $(G+e,k)$ would also be a solution for $(G,k)$ because $\prob$ is monotone.
 
 If $(G,k)$ has a solution $S$, 
 $S$ might not be a valid solution for $G+e$. However, because $\prob$ is AND-compositional, it means that
 the required property is already satisfied in every component of $G+e$ except, maybe, in the environment of $e$.
 This means that checking whether the environment of $e$ is in
 $\prob$ is enough to ensure that $G-e\in \prob$. Thus, a kernel for an environment of $e$ will be a kernel for the reoptimization instance.
% 
%  In the case where the problem has witness solutions and a rooted kernel can be computed, let us argue that a kernel rooted in a vertex adjacent to
%  $e$ is a kernel for $(G+e,k)$ in this case. 
%  If a witness solution $S'$ for the environment components would not contain the endpoints of $e$, then it would generate an induced subgraph 
%  $H$, which would also be an induced subgraph for $G$, making $S'$ also a valid solution for $(G,k)$ by comonotonicity. Moreover, 
 
 The cases for edge and vertex addition in monotone graph problems and vertex deletion in complement of monotone graph problems are completely analogous.
 
 In the case of vertex deletion, we have to, again, take into account that the number of newly generated components might make it impossible to
 find a local polynomial kernel.
\qed \end{proof}

Observe, that the theorems for AND-compositional problems do not mention local kernels for rooted versions of the problem. This is because, when constructing
a kernel for the reoptimization version of an OR-compositional problem, we are given an instance without a solution, and then the modified instance might have a solution.
Intuitively, it is clear that the new solution has to be around the local modification.
In AND-compositional problems, however, the proceadure is exactly the opposite. Given an instance that has a solution, we are provided with a local modification that renders
that solution useless. Essentially, we need to make sure, that the component or components affected by the modification still have a solution. This solution will thus, not 
need to be a new solution, but one that might already have existed within the component in the original instance, but that was not given in the reoptimization instance, as the reoptimization instance
only requires one solution for the original instance to be given.

% \begin{theorem}\label{thm:vminuscompl}
%  Let $\prob$ be a parameterized graph AND-compositional problem based on an increasingly monotone property $P$.
%  If, for every instance $(G,k)$, we can compute a polynomial kernel for an environment of any vertex $v\in V$, then $v^-$-$\prob$ is in PK.
% \end{theorem}
% 
% \begin{proof}
%  Completely analogous to the proof of Theorem~\ref{thm:eminuscompl}.
% \qed \end{proof}
% 
% Analogously to Theorems~\ref{thm:eplus} and \ref{thm:vplus}, here we have to be careful on how many connected components are being generated when 
% deleting vertices and edges. In the edge case, it is sufficient to have a problem where polynomial kernels exist for the connected case. However, it might not be 
% sufficient in the vertex case.
% 
% \begin{theorem}\label{thm:epluscompl}
%  Let $\prob$ be a parameterized graph AND-compositional problem based on a decreasingly monotone property $P$.
%  If, for every instance $(G,k)$, we can compute a polynomial kernel for an environment of any edge $e\in E$,  then $e^+$-$\prob$ is in PK.
% \end{theorem}
% 
% \begin{proof}
%   Completely analogous to the proof of Theorem~\ref{thm:eminuscompl}.
% \qed \end{proof}
% 
% \begin{theorem}\label{thm:vpluscompl}
%  Let $\prob$ be a parameterized graph AND-compositional problem based on a decreasingly monotone property $P$.
%  If, for every instance $(G,k)$, we can compute a polynomial kernel for an environment of any new vertex $v\in V$, then $v^+$-$\prob$ is in PK.
% \end{theorem}
% 
% \begin{proof}
%   Completely analogous to the proof of Theorem~\ref{thm:eminuscompl}.
% \qed \end{proof}

\section{Reoptimization Compositional Problems without Polynomial Kernels}\label{sec:recowopk}

We have just presented a general strategy to construct polynomial kernels for reoptimization versions of OR-compositional and AND-compositional problems.
Let us focus now on proving which of these problems do not have polynomial kernels even under reoptimization conditions. That is, problems where even
knowing an optimal solution for a neighboring instance does not help to build a kernel for the given instance.

First, we give an intuitive approach to the kernelization results for compositional problems.
In order to build kernels for reoptimization versions of OR-compositional and AND-compositional problems,
we took a local modification that would not break the solution, i.e., a local modification that would respect
the monotonicity properties of the problem. Through this monotonicity we then could build a kernel centered on 
the local modification, knowing that the rest of the solution remains valid.

Now we try to do the opposite. That is, we will take local modifications which go against the monotonicity
of the problem properties. Then, we use a clever built-in solution for a neighboring instance that will be broken when the local
modification occurs, yielding the knowledge of the neighboring solution useless. Let us observe this through an example.

\subsection{$e^-$-Longest Path}

We now give an example of an OR-compositional FPT problem that is not in PK under reoptimization conditions. We want to see which are the conditions
that make finding a kernel in this setting as difficult as the original problem.

In the parameterized \lp problem, the goal is, given an instance $(G,k)$ to determine whether $G$ contains a path of length at least $k$.

It is easy to see that this problem is OR-compositional, and it is NP-complete~\cite{CLRS09}, so in general, according to Theorem~\ref{thm:compositional}, 
it is not in PK unless $\rm{NP}\subseteq\rm{coNP}/\rm{poly}$.

We are going to show now that this even holds for certain reoptimization variants.

\begin{theorem}\label{thm:longestpath}
 $e^-$-\lp and $v^-$-\lp do not have polynomial kernels unless $\rm{NP}\subseteq\rm{coNP}/\rm{poly}$.
\end{theorem}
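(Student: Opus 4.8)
The plan is to show that $e^-$-\lp inherits the OR-compositional hardness of \lp by exhibiting a reoptimization instance in which the given solution for the unmodified graph is deliberately useless. The key point is that, since \lp is OR-compositional, proving that $e^-$-\lp has no polynomial kernel (unless $\rm{NP}\subseteq\rm{coNP}/\rm{poly}$) requires showing that $e^-$-\lp is itself OR-compositional as a parameterized problem --- i.e., that an OR of many $e^-$-\lp instances with parameter $k$ composes into a single $e^-$-\lp instance with parameter polynomially bounded in $k$ --- together with $e^-$-\lp being \np-hard. The \np-hardness of $e^-$-\lp follows by the standard reoptimization argument sketched earlier in the excerpt: one can transform a trivially solvable \lp instance into an arbitrary one by a polynomial number of edge deletions, so any polynomial-time algorithm for $e^-$-\lp would solve \lp.

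For the compositionality of $e^-$-\lp, the plan is the following. Given instances $((G_1,k),s_1,\ldots)$, actually given $t$ plain \lp instances $(H_1,k),\dots,(H_t,k)$, I would build a single $e^-$-\lp instance as follows: let $G = H_1 \cup \dots \cup H_t$ be the disjoint union, and add one extra ``gadget'' structure plus a distinguished edge $e$. The idea is that $(G+\text{gadget}, k)$ together with the deleted edge $e$ should satisfy: before deletion there is a cheap/trivial certificate (a long path living entirely in the gadget, so the ``old solution'' $s$ is just this gadget path and carries no information about the $H_i$'s), while after deleting $e$ the gadget path is broken, so a length-$k$ path exists in $G-e$ if and only if some $H_i$ contains a length-$k$ path. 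Concretely, the gadget can be a single long path $P$ on $\ge k$ vertices with $e$ a middle edge of $P$; then $G+P$ trivially has a long path (all of $P$), providing the canonical solution $s$, and $G+P-e$ splits $P$ into two short pieces, so $(G+P-e,k)\in\lp \iff \bigvee_i (H_i,k)\in\lp$. The parameter stays exactly $k$, and the size is $\sum_i |H_i| + O(k)$, which is exactly what OR-composition requires. Then Theorem~\ref{thm:compositional} applies to $e^-$-\lp, giving the no-polynomial-kernel conclusion.

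For $v^-$-\lp the plan is completely analogous: use the same disjoint union plus a path gadget $P$ of length $\ge k$, but now pick an internal vertex $v$ of $P$; deleting $v$ again splits $P$ into two short pieces, the pre-modification solution is the whole gadget path, and $(G+P-v,k)\in\lp$ iff some $H_i$ has a length-$k$ path. \np-hardness of $v^-$-\lp again follows from the reoptimization reduction scheme, since one can reach any graph from a trivial one by polynomially many vertex deletions (deleting all vertices and then, reading the construction backwards, the hardness transfers).

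The main obstacle I expect is making the composition argument fully rigorous in the \emph{reoptimization} formulation rather than the plain parameterized one: one must check that the object produced is a legitimate $e^-$-\lp instance, meaning the ``old'' solution $s$ has cost $\ge k$ for the unmodified instance $(G+P,k)$ and that the parameter shift $k' \in f(k)$ with $k'\ge k$ is respected (here $k'=k$ works). One must also be careful that ``composition of $e^-$-\lp instances'' is interpreted correctly --- it is an OR over the membership question $(x_{\lmod},k')\in\lp$, and the gadget must guarantee that this OR is faithfully encoded while the pre-modification part contributes nothing, so that a small kernel for the composed $e^-$-\lp instance would yield a small OR-distillation for \lp, contradicting Theorem~\ref{thm:compositional}. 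Ensuring the gadget does not accidentally create a long path crossing between the $H_i$'s and the gadget (it does not, since the union is disjoint) and that after the modification no spurious long path appears is the routine-but-essential check.
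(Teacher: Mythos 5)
Your gadget is exactly the paper's: append a disjoint path $P_k$ of length $k$ to the input, pick the deleted edge inside $P_k$, so the given solution lives entirely in the gadget and carries no information about where a length-$k$ path must be found after the modification. The paper packages this as a direct kernel-preserving reduction from \lp: given a single \lp instance $(G,k)$, it forms the $e^-$-\lp instance $\bigl((G\cup P_k,k),P_k,(G\cup(P_k-e),k)\bigr)$; by the paper's definition a polynomial kernel for this reoptimization instance is a polynomial kernel for $(G\cup(P_k-e),k)$, hence for $(G,k)$, contradicting the lower bound already established for \lp. You instead announce you will prove $e^-$-\lp is OR-compositional and then apply Theorem~\ref{thm:compositional}, but what you actually build is not an OR-composition of $e^-$-\lp instances: you compose $t$ \emph{plain} \lp instances into one $e^-$-\lp instance, i.e.\ an OR-cross-composition from \lp. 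That does give the lower bound, but only by invoking the cross-composition framework, which is not among the tools the paper sets up; alternatively, specializing to $t=1$ recovers exactly the paper's reduction, in which case the whole composition apparatus is unnecessary. Under either reading the NP-hardness of $e^-$-\lp, which you devote a step to, is not actually used (for cross-composition it is the source problem \lp that must be NP-hard).

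One concrete precision issue that could make the argument fail as written: you allow $P$ to be ``a path on $\ge k$ vertices'' with $e$ a middle edge. If $P$ is much longer than $k$, deleting a single middle edge can still leave a subpath of length $\ge k$, breaking the required equivalence $(G\cup(P-e),k)\in\lp \iff \bigvee_i (H_i,k)\in\lp$. Take $P$ to have length exactly $k$ (that is, $k+1$ vertices and $k$ edges), as the paper does; then deleting \emph{any} edge of $P$ leaves two pieces each of length strictly less than $k$. The same adjustment applies to the $v^-$-\lp gadget.
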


\begin{proof}
 We prove the claim by providing a reduction from \lp to $e^-$-\lp.
 
 Given an instance $(G,k)$ for \lp, we construct an instance for $e^-$-\lp as follows.
 Given $P_k$ a path of length $k$, let $((G\cup P_k,k), P_k, (G\cup (P_k-e),k))$  be an instance
 for $e^-$-\lp  where $e$ is an edge in $P_k$. 
 We observe, that after deleting an edge from $P_k$, $P_k$ is no longer a path of length $k$ and thus 
 $((G\cup P_k,k), P_k, (G\cup (P_k-e),k))\in e^-$-\lp  if and only if $(G,k)\in \lp$. Moreover, the solution
 $P_k$ does not provide any information about the graph $G$ in which the new solution must be found.
 Thus, if $e^-$-\lp would belong to PK, given any instance $(G,k)$ of \lp, we would be able to construct a kernel for 
 it by providing a kernel for $((G\cup P_k,k), P_k, (G\cup (P_k-e),k))$. But \lp is not in PK
 unless $\rm{NP}\subseteq\rm{coNP}/\rm{poly}$, thus proving the statement.
 
 The reduction for $v^-$-\lp is completely analogous.
\qed \end{proof}

The insight that this example provides is that, if a reoptimization instance has an easy-to-spot solution that is
not available after the reoptimization step, then solving this instance might be as hard as solving the problem in general
without any extra information.

\subsection{General Results}

In order to prove a general result about reoptimization versions of OR- and AND-compositional problems we need to understand
what an easy solution, or an easy-to-break solution looks like.

We say that a graph $G$ is {\em extremal} with respect to the problem $\prob$ and the parameter $k$ if it is a maximal
graph such that $(G,k)\in \prob$, \ie, $(G+e,k)\not\in \prob$ 
for any edge, or if it is a minimal graph such that $(G,k)\in \prob$, \ie,  $(G-e,k)\not\in \prob$ and $(G-v,k)\not\in \prob$ for any edge or vertex.

Extremal graphs, if easy to construct, will help us design, from an instance for a graph problem \prob, an instance for its reoptimization version such that
the existence of polynomial kernels for the reoptimization version would imply that \prob is also in PK.

\begin{theorem}\label{thm:orcompneg}
 Let \prob be a monotone (comonotone) NP-hard OR-compositional graph problem. If, given an instance $(G,k)$ for \prob, we can compute
 in time polynomial in $k$ an extremal graph with respect to $k$. Then,
 $e^+$-$\prob$ ($e^-$-$\prob$ and $v^-$-$\prob$ resp.) are not in PK unless $\rm{NP}\subseteq\rm{coNP}/\rm{poly}$.
\end{theorem}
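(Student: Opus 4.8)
To prove Theorem~\ref{thm:orcompneg} the plan is to generalise the reduction behind Theorem~\ref{thm:longestpath}: from an arbitrary instance of \prob we manufacture a reoptimization instance whose prescribed ``old'' solution is confined to an artificially attached \emph{extremal} component, so that the required local modification destroys that solution and leaves the yes/no answer depending only on the original instance. A polynomial kernel for the reoptimization version would then give a polynomial kernel for \prob, which is impossible by Theorem~\ref{thm:compositional} unless the polynomial hierarchy collapses.

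I would spell out the comonotone case with edge deletion, \ie, $e^-$-$\prob$; the other cases follow by cosmetic changes. Let $(G,k)$ be an instance of \prob. Using the assumed algorithm, compute in time polynomial in $k$ an extremal graph $H$ with $(H,k)\in\prob$ together with a witness solution $s_H$ for it; since \prob is comonotone, such an $H$ is necessarily \emph{minimal} (for \lp it is simply $H=P_k$ with $s_H=P_k$). Then $|H|$ is polynomial in $k$, and minimality yields $(H-e,k)\notin\prob$ for every edge $e$ of $H$. Fix such an edge $e$ and set
\[
  I \;=\; \bigl((G\cup H,\,k),\; s_H,\; (G\cup (H-e),\,k)\bigr).
\]
By OR-compositionality $(G\cup H,k)\in\prob$ with solution $s_H$, the graph $G\cup(H-e)$ differs from $G\cup H$ by a single edge deletion, and the parameter stays $k$, so $I$ is a legitimate instance of $e^-$-$\prob$. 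Applying OR-compositionality once more, $(G\cup(H-e),k)\in\prob$ iff $(G,k)\in\prob$ or $(H-e,k)\in\prob$; the latter fails by minimality of $H$, hence $(G\cup(H-e),k)\in\prob \iff (G,k)\in\prob$, and $s_H$ carries no information about $G$.

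Suppose now, for contradiction, that $e^-$-$\prob$ has a polynomial kernel. On input $I$ the kernelization runs in time polynomial in $|G|+k$ and outputs a kernel $(G'',k'')$ of size polynomial in $k$ with $(G'',k'')\in\prob \iff (G\cup(H-e),k)\in\prob \iff (G,k)\in\prob$. Composing the polynomial-time construction of $I$ with this kernelization, we obtain a polynomial kernelization of \prob itself; since \prob is NP-hard and OR-compositional, Theorem~\ref{thm:compositional} then forces $\rm{NP}\subseteq\rm{coNP}/\rm{poly}$. For $v^-$-$\prob$ one proceeds identically, deleting a vertex of $H$ and using that minimality also gives $(H-v,k)\notin\prob$. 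For the monotone case and $e^+$-$\prob$ one instead computes a \emph{maximal} extremal $H$, adds a non-edge of $H$, and invokes $(H+e,k)\notin\prob$; the argument is otherwise unchanged.

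The step that needs genuine care is the construction of the extremal gadget: it must be produced \emph{together with} a witness solution, so that $I$ is a well-formed reoptimization instance, and the chosen modification must actually be applicable — for $e^+$-$\prob$ this requires the maximal extremal graph to contain a non-edge (if the natural extremal graph happens to be complete, one would replace it by a marginally larger gadget, or argue through a vertex modification instead). Once this is arranged, the parameter conventions of the reoptimization model are satisfied trivially, since $k'=k$, and the rest is a routine transcription of the \lp proof; I do not expect any further obstacle.
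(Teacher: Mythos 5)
Your proof is correct and follows essentially the same route as the paper's: attach a polynomially computable extremal gadget $H$ to $G$, apply the single local modification that destroys membership for $H$, and use OR-compositionality to reduce the answer to $(G,k)\in\prob$, so that a polynomial kernel for the reoptimization variant would yield one for $\prob$. The only difference is cosmetic — the paper spells out the monotone/$e^+$ case and you the comonotone/$e^-$ case — plus you correctly flag that the gadget must come with an explicit witness solution for the old instance (a point the paper glosses over but which is needed for the reoptimization instance to be well-formed).
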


\begin{proof}
 Let \prob be a monotone NP hard OR-compositional graph problem and let $(G,k)$ be an instance for \prob.
 Let then $H$ be an extremal graph with respect to \prob and $k$.
 
 The instance $((G\cup H,k), H, (G\cup (H+e),k))$ for $e^+$-$\prob$ will only be in $e^+$-$\prob$ if
 $(G,k)\in \prob$, as $(H+e,k)\not\in \prob$ by construction.
 
 Thus, if $e^+$-\prob would be in PK, given any instance $(G,k)$ of \prob we would be able to construct a kernel for 
 it by providing a kernel for $((G\cup H,k), H, (G\cup (H+e),k))$. But \prob is not in PK
 unless $\rm{NP}\subseteq\rm{coNP}/\rm{poly}$, thus proving the statement.
 
 An analogous construction proves the statement for problems that are comonotone.
 \qed
\end{proof}

In particular, as a corollary we have:

\begin{corollary}\label{cor:negex}
  The following problems are not in PK unless $\rm{NP}\subseteq\rm{coNP}/\rm{poly}$.
 \begin{itemize}
  \item $e^-$-\sptr and $v^-$-\sptr
  \item $e^-$-\outtree and $v^-$-\outtree
  \item $e^-$-$d$-\cli and $v^-$-$d$-\cli
  \item $e^-$-\cli and $v^-$-\cli
 \end{itemize}
\end{corollary}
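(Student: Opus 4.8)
The plan is to obtain every item of Corollary~\ref{cor:negex} from Theorem~\ref{thm:orcompneg} by checking, for each of \sptr, \outtree, $d$-\cli, and \cli, that it is NP-hard, OR-compositional, and comonotone, and by supplying a polynomial-time constructible extremal graph. NP-hardness and OR-compositionality are either proved earlier in this section (\sptr) or are standard and already cited (\outtree, $d$-\cli, \cli); note that cliques and out-trees are connected, so a $k$-clique, resp.\ an out-tree with $k$ leaves, inside a disjoint union $G_1\cup G_2$ lies entirely in one $G_i$, which is exactly OR-compositionality. Comonotonicity is immediate: containing a subtree with at least $k$ internal vertices, an out-tree with at least $k$ leaves, or a clique of size $k$ is preserved when edges or vertices are added. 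Since all four problems are comonotone, Theorem~\ref{thm:orcompneg} will, once an extremal graph is exhibited, give the non-existence of polynomial kernels for the $e^-$ and $v^-$ variants unless $\rm{NP}\subseteq\rm{coNP}/\rm{poly}$, which is the assertion of the corollary.

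It remains to produce, for each problem and each $k$, a minimal yes-graph computable in time polynomial in $k$. For \sptr I would take the path $P_{k+2}$: it has exactly $k$ internal vertices, while deleting any edge disconnects it into two subpaths, each of which contains only subtrees with fewer than $k$ internal vertices, and deleting any vertex leaves $P_{k+1}$ or a disjoint union of shorter paths, again with fewer than $k$ internal vertices. For \outtree I would take the out-star $K_{1,k}$ --- a root with $k$ out-arcs, one to each of $k$ leaves: it has $k$ leaves, deleting any arc or any non-root vertex leaves at most $k-1$ leaves, and deleting the root leaves only isolated vertices; here we may assume $k\ge 2$, the case $k\le 1$ being trivial. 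For \cli, and equally for $d$-\cli, the complete graph $K_k$ is extremal: $K_k$ has a $k$-clique whereas $K_k-e$ and $K_k-v=K_{k-1}$ have none; for $d$-\cli we may assume $k\le d+1$, so $K_k$ has maximum degree $k-1\le d$ and is a valid input. Each of these graphs depends only on $k$ and is built in $O(\mathrm{poly}(k))$ time.

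With an extremal graph $H$ in hand, the construction of Theorem~\ref{thm:orcompneg} proceeds: from an instance $(G,k)$ of \prob one forms $((G\cup H,k),H,(G\cup(H-e),k))$ --- and analogously with $H-v$ for the vertex variant --- which by OR-compositionality lies in $e^-$-$\prob$ iff $(G,k)\in\prob$, while the supplied solution $H$ is destroyed by the deletion; a polynomial kernel for $e^-$-$\prob$ would thus kernelize \prob. I expect the only subtlety to be keeping $d$-\cli inside the class of graphs of maximum degree $d$: this is automatic, since the reduction only takes disjoint unions and deletes edges or vertices, operations that never raise the maximum degree, so as soon as $G$ and $K_k$ individually obey the degree bound, so does every graph appearing in the reduction. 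A further minor point is that \outtree is a problem on digraphs, so throughout one reads ``arc'' for ``edge''; the arguments are otherwise identical.
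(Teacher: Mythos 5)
Your proposal is correct and follows essentially the same route as the paper: apply Theorem~\ref{thm:orcompneg}, after noting that all four problems are NP-hard, OR-compositional, and comonotone, and after exhibiting a polynomial-time computable extremal graph for each. Your choices of extremal graph ($P_{k+2}$, the out-star $K_{1,k}$, and $K_k$) are just concrete realizations of the paper's ``tree with $k$ internal vertices,'' ``directed tree with $k$ leaves,'' and $K_k$; the only cosmetic difference is that you apply the theorem directly to \cli with $K_k$ as the extremal graph, whereas the paper instead observes that $d$-\cli is a subproblem of \cli and transfers the negative result --- both are valid one-line finishes.
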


\begin{proof}
 
 We have proved already that \sptr, \outtree and $d$-\cli are NP-hard, OR-compositional and complement of monotone.
 Moreover, a tree with $k$ internal vertices is extremal for \sptr, a directed tree with $k$ leaves is extremal for \outtree and 
 the complete graph with $k$ vertices, $K_{k}$, is extremal for $d$-\cli, all of them computable in polynomial time.
 For \cli it is even simpler, as $d$-\cli is a subproblem of \cli, thus
 the nonexistence of polynomial kernels for reoptimization versions of $d$-\cli implies that such kernels also do not exist for \cli.
 \qed
\end{proof}

Observe, that we have no results relating to vertex addition. This is because, when adding vertices to a graph, there is too much freedom, on how to 
make the new vertex adjacent to a specific subset of vertices of the original graph, in this sense, it can be considered, that vertex addition is not so
much of a local modification as it is a global one. In particular, when thinking about extremal graphs, there exist problems for which
specific graphs are extremal only if vertices are added with all the other vertices as neighbors or none of them.

For AND-compositional graph problems, a similar result can be stated by constructing a reoptimization instance with
a graph that is extremal with respect to the complement problem.

\begin{theorem}\label{thm:andcompneg}
 Let \prob be a monotone (comonotone) NP-hard AND-compositional graph problem. 
 $e^-$-$\prob$ and $v^-$-$\prob$ ($e^+$-$\prob$ resp.) are not in PK unless $\rm{NP}\subseteq\rm{coNP}/\rm{poly}$.
\end{theorem}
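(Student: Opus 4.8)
The plan is to run the argument of Theorem~\ref{thm:orcompneg} for the complement problem. First I would record two routine closure facts, obtained by contraposing the defining implications: if $\prob$ is monotone then $\bar{\prob}$ is comonotone, and if $\prob$ is comonotone then $\bar{\prob}$ is monotone; moreover, by Definition~\ref{def:compositionalcompl}, $\bar{\prob}$ is OR-compositional. Consequently ``extremal with respect to $\bar{\prob}$ and $k$'' is a meaningful notion: in the monotone case one takes a \emph{minimal} graph $H$ with $(H,k)\in\bar{\prob}$, so that $(H,k)\notin\prob$ while $(H-e,k)\in\prob$ for every edge $e$ of $H$ and $(H-v,k)\in\prob$ for every vertex $v$; in the comonotone case one takes a \emph{maximal} such $H$, so that $(H,k)\notin\prob$ while $(H+e,k)\in\prob$ for every non-edge $e$. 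The hypothesis I use is the exact counterpart of the one in Theorem~\ref{thm:orcompneg}: such an $H$ is computable from $(G,k)$ in time polynomial in $k$.

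Next I would treat the monotone case with edge deletion. Let $(G,k)$ be an arbitrary instance of $\prob$, take $H$ as above, and form the reoptimization instance $((G\cup H,k),\bot,(G\cup(H-e),k))$ for $e^-$-$\prob$, where $e$ is any edge of $H$. Since $(H,k)\notin\prob$, AND-composition (Definition~\ref{def:compositionalcompl}) gives $(G\cup H,k)\notin\prob$, so $\bot$ is indeed the correct solution component and the whole reoptimization instance is computable in polynomial time. Again by AND-composition, $(G\cup(H-e),k)\in\prob$ iff $(G,k)\in\prob$ and $(H-e,k)\in\prob$; since $(H-e,k)\in\prob$ by construction, this reduces to $(G,k)\in\prob$. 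Thus the reoptimization instance lies in $e^-$-$\prob$ exactly when $(G,k)\in\prob$, and a polynomial kernel for $e^-$-$\prob$ would --- being, by the paper's convention, a polynomial kernel for $(G\cup(H-e),k)$ --- give a polynomial kernel for the arbitrary instance $(G,k)$ of $\prob$. Since $\prob$ is NP-hard and AND-compositional, Theorem~\ref{thm:compositionalcompl} then forces $\rm{NP}\subseteq\rm{coNP}/\rm{poly}$.

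The remaining cases reuse the same gadget. For $v^-$-$\prob$ one deletes a vertex $v$ of $H$ instead of an edge; minimality of $H$ with respect to $\bar{\prob}$ yields $(H-v,k)\in\prob$, and the computation is unchanged. For the comonotone case and $e^+$-$\prob$ one uses a maximal $H$ with $(H,k)\in\bar{\prob}$ and the instance $((G\cup H,k),\bot,(G\cup(H+e),k))$ with $e$ a non-edge of $H$, so that $(H+e,k)\in\prob$. As in Theorem~\ref{thm:orcompneg}, nothing is claimed for vertex addition, which is too unrestricted to count as a genuinely local modification.

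The one genuinely new point compared with the OR-compositional argument is that here the gadget must make the composite base instance $G\cup H$ a \emph{no}-instance, so that the prescribed old solution is simply $\bot$ and no solution ever has to be produced; this is forced, since for a monotone $\prob$ a yes base-instance would remain a trivial yes-instance after an edge or vertex deletion, rendering the reduction vacuous. Beyond that, the only care needed is that $H$ be non-degenerate --- that it really contains an edge (respectively a vertex, respectively a missing edge) on which to perform the modification --- which holds for all the concrete problems of interest, exactly as in Corollary~\ref{cor:negex}.
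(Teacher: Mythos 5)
Your proof is correct and follows essentially the same route as the paper's: build an extremal graph $H$ with respect to the complement problem $\bar{\prob}$, form the composite reoptimization instance $((G\cup H,k),\bot,(G\cup(H-e),k))$, observe that the base instance is a no-instance (so $\bot$ is the legitimate solution component) and that the modified instance is a yes-instance iff $(G,k)\in\prob$, and transfer a hypothetical polynomial kernel back to $\prob$ via Theorem~\ref{thm:compositionalcompl}. You are right to state explicitly the hypothesis that such an extremal $H$ be computable in time polynomial in $k$ — the paper's proof uses this tacitly but, unlike Theorem~\ref{thm:orcompneg}, the statement of Theorem~\ref{thm:andcompneg} omits it; adding that condition (as you did, and as the subsequent $K_{k+2}$ treewidth example satisfies) is the more careful formulation.
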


\begin{proof}
 Let \prob be a monotone NP-hard AND-compositional graph problem and let $(G,k)$ be an instance for \prob.
 Let then $H$ be an extremal graph with respect to $\prob^c$ and $k$. This means, that
 $(H,k)\not \in \prob$, however, $(H-e,k) \in \prob$ for any edge $e$.
 
 The instance $((G\cup H,k), \bot, (G\cup (H-e),k))$ for $e^-$-$\prob$ will only be in $e^-$-$\prob$ if
 $(G,k)\in \prob$, as $(H-e,k)\in \prob$ by construction.
 
 Thus, if $e^-$-\prob would be in PK, given any instance $(G,k)$ of \prob we would be able to construct a kernel for 
 it by providing a kernel for $((G\cup H,k), \bot, (G\cup (H-e),k))$. But \prob is not in PK
 unless $\rm{NP}\subseteq\rm{coNP}/\rm{poly}$, thus proving the statement.
 
 An analogous construction proves the statement for $v^-$-$\prob$ and for problems that are comonotone.
 \qed
\end{proof}

%!!!!!!!!!!!!!!!!!!!!!!!!!!!!!!!!!!!!!!!!!!!!!!!!!!! treewidth
Let us now present a problem that this theorem can be applied to, the \tw problem.
The aim of this problem is to measure how tree-like a problem is. In order to do so
we define the following structure.

\begin{definition}
 Let $G=(V,E)$ be a graph. A \emph{tree decomposition} of $G$ is a pair $D=(T,B)$, where
 $T=(V_T,E_T)$ is a tree. Let $I$ denote an arbirtary index set enumerating the vertices from $V_T$.
 Then $B$ is a labeling function $B:I\rightarrow 2^{V}$ that assigns a vertex set $X_i\subseteq V$ to
 each  index $i\in I$ (that is, to each vertex from $V_T$). These sets $X_i$ are called \emph{bags}.
 Moreover, $D$ satisfies the following properties:
 \[\bigcup_{i\in I}X_i=V,\]
 for every edge $\{u,v\}\in E$, there exists an index $i\in I$ such that $u,v\in X_i$, and for each
 $v \in V$, the bags $X_i$ containing $v$ are assigned to a subtree of $T$.
\end{definition}

The \emph{width} of $D$ is defined as $\max\{|X_i| | i\in I\}-1$, that is, the maximum size of a bag minus 1.
The \emph{treewidth} of $G$ is the minimum width over all tree decompositions of $G$, it is denoted by $\rm{tw}(G)$

Given an instance $(G,k)$ consisting of a graph $G$ and a parameter $k$, we say that $(G,k)\in \tw$ if and 
only if $\rm{tw}(G)\le k$.

This problem is NP-hard~\cite{GJ90}, AND-compositional~\cite{BDFH09}, and monotone, as we can see,
by inspection, that given a tree decomposition for a garph $G$, it is also a tree decomposition
for any $G-e$ or $G-v$ if we remove the removed vertex from the bags containing it.

In particular, as a corollary we have:

\begin{corollary}\label{cor:negex}
  $e^-$-\tw and $v^-$-\tw are not in PK unless $\rm{NP}\subseteq\rm{coNP}/\rm{poly}$.
\end{corollary}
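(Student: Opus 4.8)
The plan is to apply Theorem~\ref{thm:andcompneg} to the \tw problem. We have already collected all the hypotheses we need: \tw is NP-hard, it is AND-compositional, and it is monotone (a tree decomposition of $G$ restricts to one of $G-e$ or $G-v$). So the only thing that remains is to verify the one ingredient that Theorem~\ref{thm:andcompneg} implicitly requires in its proof, namely that, given an instance $(G,k)$, we can compute in time polynomial in $k$ a graph $H$ that is extremal with respect to the complement problem $\tw^c$ and the parameter $k$. That is, $H$ must satisfy $(H,k)\notin\tw$ (so $\mathrm{tw}(H)\ge k+1$) while $(H-e,k)\in\tw$ for every edge $e$ and $(H-v,k)\in\tw$ for every vertex $v$.

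First I would exhibit such an $H$. The natural candidate is the complete graph $K_{k+2}$: it has treewidth exactly $k+1$, so $(K_{k+2},k)\notin\tw$, and deleting any edge or any vertex drops the treewidth to $k$ (deleting a vertex leaves $K_{k+1}$, which has treewidth $k$; deleting an edge $\{u,v\}$ from $K_{k+2}$ leaves a graph on $k+2$ vertices whose treewidth is $k$, witnessed by the tree decomposition consisting of the two bags $V(K_{k+2})\setminus\{u\}$ and $V(K_{k+2})\setminus\{v\}$, each of size $k+1$, joined by an edge). Hence $K_{k+2}$ is extremal with respect to $\tw^c$, and it is clearly constructible in time polynomial in $k$. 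Then I would simply invoke Theorem~\ref{thm:andcompneg}: feeding $H=K_{k+2}$ into the reduction there shows that, from an arbitrary \tw instance $(G,k)$, one builds a reoptimization instance $((G\cup H,k),\bot,(G\cup(H-e),k))$ which lies in $e^-$-\tw iff $(G,k)\in\tw$, and analogously for $v^-$-\tw. Since \tw has no polynomial kernel unless $\rm{NP}\subseteq\rm{coNP}/\rm{poly}$ (by Theorem~\ref{thm:compositionalcompl}, as it is AND-compositional and NP-hard), neither do $e^-$-\tw nor $v^-$-\tw.

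I do not expect a serious obstacle here, since the corollary is essentially a direct instantiation of the already-proved general theorem; the only slightly delicate point is the bookkeeping on the treewidth of $K_{k+2}$ and of its one-edge- and one-vertex-deleted subgraphs, which is standard (treewidth of $K_n$ is $n-1$, and the two explicit size-$(k+1)$ bags above certify the drop after an edge deletion). One caveat worth stating explicitly in the write-up is that the parameter shift between $k$ and $k+1$ (we want $\mathrm{tw}(H)=k+1$, not $k$) is harmless: the reoptimization framework only requires $k'\le k$ with $k'\in f(k)$ for a computable $f$, and here $k'=k$ throughout — it is the gadget graph $H$, not the parameter, that is tuned to force $\mathrm{tw}(H)>k$. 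With that remark in place, the proof is complete.

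\begin{proof}
 We apply Theorem~\ref{thm:andcompneg}. As noted above, \tw is NP-hard, AND-compositional, and monotone. It remains to observe that, given $(G,k)$, we can compute in polynomial time a graph $H$ that is extremal with respect to $\tw^c$ and $k$, i.e., with $(H,k)\notin\tw$ but $(H-e,k)\in\tw$ and $(H-v,k)\in\tw$ for every edge $e$ and vertex $v$. Take $H=K_{k+2}$. Then $\mathrm{tw}(H)=k+1>k$, so $(H,k)\notin\tw$. Deleting a vertex leaves $K_{k+1}$ with treewidth $k$, so $(H-v,k)\in\tw$. Deleting an edge $\{u,v\}$ leaves a graph on $k+2$ vertices admitting the tree decomposition with the two bags $V(H)\setminus\{u\}$ and $V(H)\setminus\{v\}$ (each of size $k+1$) joined by an edge, so $(H-e,k)\in\tw$. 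Hence $H$ is extremal with respect to $\tw^c$ and is computable in time polynomial in $k$. By Theorem~\ref{thm:andcompneg}, $e^-$-\tw and $v^-$-\tw are not in PK unless $\rm{NP}\subseteq\rm{coNP}/\rm{poly}$.
\qed
\end{proof}
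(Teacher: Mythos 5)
Your proof is correct and follows essentially the same route as the paper: both apply Theorem~\ref{thm:andcompneg} after observing that \tw is NP-hard, AND-compositional, and monotone, and both use $K_{k+2}$ as the extremal gadget (the paper even gives exactly the same reoptimization instance $((G\cup K_{k+2},k),\bot,(G\cup(K_{k+2}-e),k))$). Your write-up is slightly more careful in explicitly exhibiting the two-bag tree decomposition certifying $\mathrm{tw}(K_{k+2}-e)=k$, and in flagging that Theorem~\ref{thm:andcompneg} implicitly relies on the extremal graph being efficiently constructible, which is a fair reading of the paper's statement versus its proof.
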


Let us also see a concrete proof to see how one constructs the instances mentioned in the proof
of Theorem~\ref{thm:andcompneg}.

\begin{proof}
 Let $(G,k)$ be an instance for \tw. Let now $H$ be a graph with
 treewidth $k+1$ such that, for any edge $e$, $H-e$ has treewidth $k$. For instance,
 the complete graph with $k+2$ vertices $K_{k+2}$ fulfills this property.
 If $e^-$-\tw was in PK, it would be able to provide a polynomial kernel for the instance
 $((G\cup K_{k+2},k),\bot,(G\cup (K_{k+2}-e,k))$. Observe that, 
 because $K_{k+2}$ has treewidth $k+1$, the treewidth of
 $G\cup K_{k+2}>k$, thus $(G\cup K_{k+2},k)\notin \tw$  and it is valid to put $\bot$ as the second element of the instance.
 Moreover, because $K_{k+2}-e$ has treewidth $k$, $((G\cup K_{k+2},k),\bot,(G\cup (K_{k+2}-e,k))\in e^-$-$\tw$
 if and only if $(G,k)\in \tw$, thus a kernel for the reoptimization instance
 would provide us with a kernel for the initial instance.
 \qed
\end{proof}

\subsection{Other Reoptimization Compositional Problems without Polynomial Kernels}

We have seen that for every monotone, or comonotone, NP-hard compositional graph problem, two of
its reoptimization variants are not in PK. Nevertheless, finding local kernels for the other two reoptimization
variants is not trivial either.

We now provide problems where finding local kernels is just as hard as finding kernels in general, thus,
making all of their reoptimization variants as hard as the original problem.

\subsubsection{Clique}

We already know that reoptimization \cli instances parameterized by the size of the clique
do not have polynomial kernels in the case of edge and vertex deletion.

We now show that under the other two local modifications \cli is not in PK either.

\begin{theorem}\label{thm:cli}
$e^+$-$\cli$ and $v^+$-$\cli$ are not in PK unless $\rm{NP}\subseteq\rm{coNP}/\rm{poly}$.
\end{theorem}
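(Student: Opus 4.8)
The plan is to reduce the classical \cli problem, parameterized by the size of the clique, to $e^+$-\cli and to $v^+$-\cli by polynomial-time maps that increase the parameter only by an additive constant; a polynomial kernel for either reoptimization version (which, by the convention fixed in the excerpt, is a polynomial kernel for the \emph{modified} instance, itself a \cli instance) would then compose to a polynomial kernel for \cli. Since \cli is OR-compositional in the sense of Definition~\ref{def:compositional} (a $k$-clique is connected, hence lies in one component of a disjoint union) and NP-hard, Theorem~\ref{thm:compositional} would give $\rm{NP}\subseteq\rm{coNP}/\rm{poly}$. The obstacle relative to Theorem~\ref{thm:orcompneg} is that \cli is comonotone: edge and vertex addition never destroy a clique, so the ``extremal gadget plus decoy solution'' trick collapses — after the modification the answer can only stay the same or flip from NO to YES, and handing over a genuine clique as the old solution would reveal the answer. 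The way around this is to make the old instance a \emph{provable} NO-instance and let the modification be exactly the thing that can create the sought clique; this forces us to work with instances of bounded clique number, which is why we route the reduction through the multicolored (partitioned) version of \cli.

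First I would recall \textsc{Multicolored Clique}: the input is a graph whose vertex set is partitioned into $k$ independent classes $V_1,\dots,V_k$, and one asks for a clique picking one vertex from each class. The textbook reduction from \cli ($k$ disjoint copies of $V(G)$ as classes, with an edge between copies in different classes whenever the originals are adjacent) makes it NP-hard, and it is also OR-compositional (merge the classes of two instances; a colorful clique, being connected, sits in a single component). The structural fact I need is that in such a graph \emph{every} clique has size at most $k$, so the graph has a $k$-clique iff it has a colorful clique iff its clique number $\omega$ equals $k$, and otherwise $\omega<k$.

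Given a \cli instance $(G,k)$, let $G_{\mathrm{mc}}$ be the associated multicolored instance on $k$ classes, so $\omega(G_{\mathrm{mc}})\le k$ always, with equality iff $(G,k)\in\cli$. For $e^+$-\cli, form $G^{\mathrm{old}}$ by adding to $G_{\mathrm{mc}}$ two new vertices $u,v$, each joined to all of $V(G_{\mathrm{mc}})$ but not to each other, and set the parameter to $k+2$. Then $\omega(G^{\mathrm{old}})=1+\omega(G_{\mathrm{mc}})\le k+1<k+2$, so $(G^{\mathrm{old}},k+2)\notin\cli$ and $\bot$ is a legitimate ``old solution''. Adding the edge $e=\{u,v\}$, a clique of $G^{\mathrm{old}}+e$ using $e$ consists of $u,v$ together with a clique inside $N(u)\cap N(v)=V(G_{\mathrm{mc}})$, so $\omega(G^{\mathrm{old}}+e)=\max(\omega(G^{\mathrm{old}}),\,2+\omega(G_{\mathrm{mc}}))=2+\omega(G_{\mathrm{mc}})$, which is $\ge k+2$ iff $(G,k)\in\cli$. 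Hence $((G^{\mathrm{old}},k+2),\bot,(G^{\mathrm{old}}+e,k+2))\in e^+\text{-}\cli$ iff $(G,k)\in\cli$. For $v^+$-\cli the same idea is even simpler: take $(G_{\mathrm{mc}},k+1)$ with old solution $\bot$ (valid since $\omega(G_{\mathrm{mc}})\le k<k+1$), and let the modification add a new vertex $v$ adjacent to all of $V(G_{\mathrm{mc}})$; then $\omega(G_{\mathrm{mc}}+v)=1+\omega(G_{\mathrm{mc}})\ge k+1$ iff $(G,k)\in\cli$.

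Both maps run in polynomial time and raise the parameter by at most $2$, and the modified instance produced is a \cli instance equivalent to the original $(G,k)$. Composing a hypothetical polynomial kernel for $e^+$-\cli (resp.\ $v^+$-\cli) with these reductions yields a polynomial kernel for \cli, which by Theorem~\ref{thm:compositional} forces $\rm{NP}\subseteq\rm{coNP}/\rm{poly}$. The step I expect to need the most care is precisely the bookkeeping that simultaneously (i) certifies that the old instance is genuinely a NO-instance, so that $\bot$ is admissible, and (ii) shows the single local modification alone can realize a clique of the target size — and it is exactly the bound $\omega(G_{\mathrm{mc}})\le k$, unavailable for plain \cli, that makes both hold at once.
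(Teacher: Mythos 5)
Your reduction is correct, but it takes a genuinely different route from the paper's. Both proofs reduce from \cli and attach vertices that are universal to the original graph, but you detour through multicolored clique so that $\omega(G_{\mathrm{mc}})\le k$, making the old instance a certifiable NO-instance whose handed-over ``solution'' is $\bot$, and you keep the parameter fixed across the modification. The paper instead makes the old instance a YES-instance by adjoining a disjoint decoy clique, hands over that decoy as the solution, and \emph{raises the parameter} in the modified instance: concretely $((K_{k+1}\cup G',k+1),K_{k+1},(K_{k+1}\cup G'+e_{1,2},k+2))$ for $e^+$-\cli, with $G'$ obtained from $G$ by adding two universal non-adjacent vertices $v_1,v_2$, and $((K_k\cup G,k),K_k,(K_k\cup G+v_1,k+1))$ for $v^+$-\cli. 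That parameter increase is precisely what makes the decoy-solution trick viable, contrary to your preamble's claim that ``handing over a genuine clique as the old solution would reveal the answer'': a $(k+1)$-clique in the old instance says nothing about a $(k+2)$-clique in the modified one, and the reoptimization model explicitly permits $k'\ge k$ for maximization problems (the paper even remarks on this). Your variant avoids the extra component and the parameter shift at the cost of the multicolored construction and the accompanying $\omega\le k$ bookkeeping; the paper's is shorter and showcases the parameter-increase device. Both conclude identically by composing a hypothetical kernel for the reoptimization problem with the reduction and invoking OR-compositionality and NP-hardness of \cli via Theorem~\ref{thm:compositional}.
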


\begin{proof}
 With a very similar technique as in the previous section, we are going to show this result by reducing \cli
 to $e^+$-$\cli$ and $v^+$-$\cli$.
 
 Let $(G,k)$ be an instance for \cli. We construct the following instance for $e^+$-$\cli$. Let $G'$ be a graph that consists
 of the graph $G$ together with $v_1$ and 
 $v_2$, two new vertices adjacent to every vertex in $G$ but not to each other. Let then $e_{1,2}$ denote
 the edge between $v_1$ and $v_2$.
 Then $((K_{k+1}\cup G',k+1),K_{k+1},(K_{k+1}\cup G'+e_{1,2}, k+2))\in e^+$-$\cli$ if and only if $(G,k)\in \cli$.
 Observe, that, because \cli is a maximization problem, it is possible to make the parameter larger.
 
 If a polynomial kernel could be computed from this constructed reoptimization instance, it would also be a polynomial kernel for the \cli instance by construction.
 So $e^+$-$\cli$ is not in PK unless \cli is, and because \cli is an NP-hard OR-compositional
 problem it does not have a kernel unless $\rm{NP}\subseteq\rm{coNP}/\rm{poly}$.
 
 As for $v^+$-$\cli$, we just need to consider, given an instance $(G,k)$ for \cli, the reoptimization instance
 $((K_k\cup G,k),K_k,(K_k\cup G + v_1,k+1))$, where $v_1$ is adjacent to every vertex in $G$, and observe again that this instance
 is in $e^+$-$\cli$ if and only if $(G,k)\in \cli$. Thus, we reach the same conclusion.
 \qed
\end{proof}

\section{Non-Compositional Problems without Polynomial Kernels. Connected Vertex Cover}
\label{sec:cvc}

One of the non-compositional problems in which reoptimization does not help us to achieve any improvement with respect to the classical parametrization techniques, is
the \covc(CVC) problem. 
A {\em connected vertex cover} of a graph $G=(V,E)$ is a subset of vertices $A\subseteq V$ that is a vertex cover of 
$G$ and such that the subgraph induced by $A$ is connected. \covc is FPT with respect to the solution size~\cite{Cy12}. Moreover, it is
conjectured that \covc does not have a polynomial Turing Kernel~\cite{HK13}.

We build a reduction from \setc that will show that even the reoptimization versions of \covc do not have a polynomial kernel
unless \setc has a polynomial compression with respect to its universe size. First we define the notion of polynomial compression.
Informally, we can think of a compression as a way to transform an instance for a problem $\prob_1$ into a kernel for a problem $\prob_2$.
This concept is a bit more general than kernelization in the sense
that it allows to show non-kernelization results for problems that
are not NP-complete:  If an NP-complete problem compresses to a
problem $X$, then the compressed instance of $X$ can be transformed
back into an instance of the original problem.  Hence, a polynomial
compression gives you automatically a polynomial kernel.

\begin{definition}\label{def:polycom}[Cygan et al.~\cite{CFK+15}]
 A \emph{polynomial compression} of a parameterized language $Q\subseteq \Sigma^*\times \N$ into a language $R\subseteq \Sigma^*$ is an algorithm
 that takes as input an instance $(x,k)\in \Sigma^*\times \N$, works in time polynomial in $|x|+k$, and returns a string $y$ such that $|y|\le p(k)$,
 for some polynomial $p$, and $y\in R$ if and only if $(x,k)\in Q$.
\end{definition}

Moreover, Dom et al. prove in \cite{DLS09} that \setc parameterized by the size of the universe does not have a polynomial compression unless $\rm{NP}\subseteq \rm{coNP}/\rm{poly}$.

We prove through a reduction that, if $e^+$-\covc had a polynomial kernel,
then \setc parameterized by the size of the universe would have a polynomial compression. Formally:

\begin{theorem}\label{thm:cvc}
$e^+$-\covc does not have a polynomial kernel unless $\rm{NP}\subseteq \rm{coNP}/\rm{poly}$.
\end{theorem}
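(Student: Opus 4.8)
The plan is to give a polynomial-time reduction that, starting from an instance $(\setU,\F,k)$ of \setc parameterized by the universe size $|\setU|$, produces an $e^+$-\covc reoptimization instance $((G,\kappa),s,(G+e,\kappa'))$ such that the reoptimization instance is a yes-instance if and only if $(\setU,\F,k)\in\setc$, and moreover $\kappa,\kappa'$ are bounded by a polynomial in $|\setU|$. If $e^+$-\covc had a polynomial kernel, composing the reduction with the kernelization would produce, from $(\setU,\F,k)$, an instance of \covc whose size is polynomial in $|\setU|$; since \covc is in \np, this instance can be mapped back (via an \np-reduction to \setc, or simply viewed as a \covc-instance) to yield a polynomial compression of \setc parameterized by the universe size into some language in \np, contradicting the Dom--Lokshtanov--Saurabh result unless $\rm{NP}\subseteq\rm{coNP}/\rm{poly}$. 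So the whole content is in building the right gadget and in arranging that the ``old solution'' $s$ is genuinely unhelpful.

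**The gadget.** I would build the standard \setc-to-\covc graph: a central vertex $r$ (or a small connected core) adjacent to one vertex $f_S$ for each set $S\in\F$, each $f_S$ adjacent to the element-vertices $u\in S$, plus some pendant/forcing structure ensuring that an optimal connected vertex cover must take $r$, all the $f_S$ it uses, and enough of the $\{f_S\}$ to dominate all element-vertices — so that a connected vertex cover of a controlled size corresponds to a set cover of size $k$. The parameter $\kappa$ is then $|\F|$-independent: roughly $|\setU|+k+O(1)$ after contracting or pendant-forcing the set-side, which is polynomial in $|\setU|$ (here one uses that in \setc parameterized by universe size we may assume $|\F|\le 2^{|\setU|}$, but crucially the \emph{kernel size} we are forbidding is polynomial in the parameter $\kappa$, not in $|\F|$). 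The reoptimization twist is to add a disjoint ``padding'' or ``pre-solved'' gadget $H$ together with an explicit optimal connected vertex cover $s$ of $G_0:=G\cup H$ for the \emph{pre-modification} instance, chosen so that the edge addition $e$ destroys the usefulness of $s$: either $e$ raises the required cover size in the component of $H$ (forcing a recomputation there that reveals nothing about the \setc-part), or — mirroring the $e^+$-\cli argument of Theorem~\ref{thm:cli} — $e$ connects two previously separate pieces so that the old per-component solution no longer certifies anything, and the only way to answer is to solve the \setc-part from scratch. I would pick the version where $H$ is a fixed small extremal-type graph for connected vertex cover so that $(G\cup H,\kappa)$ has the obvious solution $s$ but $(G\cup(H+e),\kappa')$ is a yes-instance iff $(G,\kappa-c)\in\covc$ for the appropriate constant shift $c$, exactly as in Theorem~\ref{thm:orcompneg} but adapted to the non-compositional CVC setting.

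**Main obstacle.** The delicate point is the parameter bookkeeping: \covc is not compositional, so I cannot simply take disjoint unions and argue ``the solution lives in one component''; I must ensure that a connected vertex cover of the whole modified graph of the target size is \emph{forced} to split into a fixed contribution on $H+e$ plus exactly a set-cover-sized contribution on the \setc-gadget, with no cheaper global solution exploiting connections through $r$. Getting the pendant/forcing gadgetry right so that (i) $\kappa'$ stays polynomial in $|\setU|$, (ii) the old solution $s$ is a legitimate optimal (or near-optimal, within the reoptimization model's cost bound) connected vertex cover of the un-modified instance, and (iii) the equivalence ``$e^+$-\covc yes $\iff$ \setc yes'' holds on the nose, is the real work. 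The final step — turning a hypothetical polynomial kernel for $e^+$-\covc into a polynomial compression of \setc — is then routine: the kernel is an instance of \covc of size $\mathrm{poly}(|\setU|)$, \covc$\in\np$, hence by the standard argument recalled just before Definition~\ref{def:polycom} this yields a polynomial compression of \setc parameterized by universe size, which by Dom et al.~\cite{DLS09} implies $\rm{NP}\subseteq\rm{coNP}/\rm{poly}$.
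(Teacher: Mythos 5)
Your high-level plan --- reduce from \setc parameterized by universe size, then argue that a polynomial kernel for $e^+$-\covc composed with this reduction would yield a polynomial compression of \setc, contradicting Dom et al.\ unless $\rm{NP}\subseteq\rm{coNP}/\rm{poly}$ --- matches the paper's strategy. But the heart of the construction, namely how to provide an old solution $s$ that is genuinely unhelpful, is where your proposal breaks down. You propose attaching a disjoint padding gadget $H$ (in the spirit of Theorem~\ref{thm:orcompneg}) and providing a per-component solution. This cannot work for \covc: a \emph{connected} vertex cover of a disjoint union $G\cup H$, where both $G$ and $H$ contain edges, does not exist, so $(G\cup H,\kappa)$ is never a yes-instance and there is no meaningful solution $s$ to hand over. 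You do flag this in your ``Main obstacle'' paragraph (``\covc is not compositional, so I cannot simply take disjoint unions''), but you then leave the resolution as ``the real work'' without giving it, and your fallback phrasing (``forced to split into a fixed contribution on $H+e$ plus a set-cover-sized contribution'') still presupposes a separable two-component structure that the connectedness requirement forbids. So what you've written is an accurate diagnosis of the difficulty, not a proof.

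The paper resolves this with a different idea that your proposal does not contain: everything is built into a \emph{single connected} gadget (a $(k+2)\times(u+1)$ grid of vertices each with a pendant leaf, set-vertices $f_\ell$, side-vertices $v_i$, plus apex vertices $f$, $x$, $y$), and the construction is arranged so that the \emph{pre-modification} instance admits two \emph{distinct} optimal connected vertex covers of equal size: $S_1$, which connects the grid via all $v_i$ and reveals nothing about the \setc instance, and $S_2$, which connects the grid via a set-cover selection of the $f_\ell$ together with $x$. The edge to be added, $e=(x,u_{k+2,0})$, is chosen so that after the modification $S_2$ can be shrunk by one vertex (drop $v_{k+2}$) while $S_1$ cannot, so the decremented budget $c+1$ is achievable iff $(\setU,\F,k)$ has a set cover of size $k$. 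Handing over $S_1$ as the old solution is legitimate (it is optimal for the original instance) yet useless for the new one. Your simpler ``central vertex $r$ plus set/element vertices'' sketch also does not make the parameter polynomial in $u$ in the way you need; the paper's grid (with $k+2$ rows and pendant leaves forcing all grid vertices into every cover) is what drives the parameter to $(k+2)(u+2)+O(1)=\mathrm{poly}(u)$ while making the only remaining choice be ``how to connect the grid,'' which is exactly the \setc question. So the gap is concrete: you are missing the single-connected-gadget, two-tied-solutions mechanism, and without it the disjoint-union route cannot be repaired for a connectivity-constrained problem.
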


\begin{proof}
We describe the reduction from  \setc parameterized by the size of the universe to \covc. Then we use
this reduction to prove that, if a polynomial kernel would exist for $e^+$-\covc, then we would have a polynomial compression
for \setc parameterized by the size of the universe, but such a compression is not possible, unless $\rm{NP}\subseteq \rm{coNP}/\rm{poly}$.

A \setc instance parameterized by the size of the universe is a quadruple,
$((\setU,\F, k), u)$ where $(\setU,\F, k)$ is the instance, comprised by $\setU$, the universe set,
of size $|\setU|=u$, $\F=\{F_1,\hdots F_t\}$, a family of subsets, and $k$, the solution size targeted,
and $u$ is the parameter, as defined in the introduction. We want to answer the question: Is there a subfamily of $k$ sets of \F that covers \setU?

Until now, we always considered the solution size as the parameter in all of our parameterized problems. This, however, 
is not fixed as such in the definition of kernelization, which allows us to choose the parameter with other criteria, as we do in this case.

We only consider instances for \setc where $k\le u$, because otherwise the solution is trivial. This is because any
subset $F_i$ in the optimal solution should cover at least one element in \setU that is not covered by any other selected
subset. Otherwise, the subset could be trivially removed and the solution would be smaller.

Let us first show the following reduction from \setc to \covc. Given a \setc instance $((\setU,\F, k), u)$,
we construct an instance for \covc as shown in Fig.~\ref{fig:cscsol1}.

We create a grid of vertices $u_{i,j}$ where $i=1,\hdots ,k+2$ and $j=0,\hdots, u$. 
Each of these vertices has an attached leaf $u'_{i,j}$.
Each one of the columns 1 to $u$ of the original grid represents one of the elements of the universe set in the \setc instance. Column 0 is an additional column which
can be viewed as an extra element added to the set.

We add a row of vertices $f_1,\hdots, f_t$ such that each vertex $u_{i,j}$ of the column $j$, 
will be connected to $f_\ell$ if and only if $j\in F_\ell$.
We also add a vertex $x$ which is connected to all the first column $u_{i,0}$ for all $i=1,\hdots,k+1$,
except for $u_{k+2,0}$. We can think of $x$ as an extra set in the family of subsets $\F$, containing only
the new element of the set. The edge $(x,u_{k+2,0})$ will be added in the reoptimization step (dashed edge in Fig.~\ref{fig:cscsol1}).

We add a column $v_1,\hdots, v_{k+2}$ such that each vertex $u_{i,j}$ of row $i$, will be connected to
$v_i$, as represented in Fig.~\ref{fig:cscsol1}.
Finally, we add two vertices $f$ and $y$, $f$ neighboring $f_\ell$, for all $\ell=1,\hdots,t$, and also $x$ and $y$, and $y$ additionally neighboring $v_i$, for 
all $i=1,\hdots,k+2$.

Now we will use this reduction to prove Theorem~\ref{thm:cvc}.

 Given a non-trivial \setc instance $((\setU,\F, k), u)$, we construct an $e^+$-\covc instance as follows.
 Take first the graphs $G$, $G+e$ constructed by the 
 reduction described above. 
 Now, this instance is not complete unless we provide the appropriate parameters for $G$ and $G+e$ and a solution for $G$.
 
 We will now construct two optimal solutions for $G$.
 
 To select a connected vertex cover in the graph, first observe that, if the grid vertex $u_{i,j}$ is not part of the connected vertex cover, then, even if all the other vertices in the graph were in the cover,
the cover would not be connected.

Thus, we select all the vertices $u_{i,j}$ of the grid (\ie, $(k+2)(u+1)$ vertices in total). This covers all leaf edges $(u_{i,j},u'_{i,j})$, all edges $(u_{i,j},f_\ell)$ and
all edges $(u_{i,j},v_{i})$. It does not cover, however, the edges $(f_\ell,f)$, $(v_{i}, y)$ and $(y,f)$, and it is not connected.

The vertex $f$  needs to be taken to cover the edges $(f_\ell,f)$ because, again, taking every $f_\ell$ would make the vertex cover too large with respect to $u$.
This is because $t$ has to be at least $k+1$ for the instance to be nontrivial.
Moreover, to connect the $u_{i,j}$, we have two options:
\begin{enumerate}
 \item Take all $v_{i}$ and $y$: With these vertices we cover the remaining 
edges and we obtain a CVC of size $(k+2)(u+1)+k+2+2=(k+2)(u+2)+2$. We will name this solution $S_1$ and we will
also name $c=(k+2)(u+2)$ making the size of this solution $c+2$. (See Fig.~\ref{fig:cscsol1})

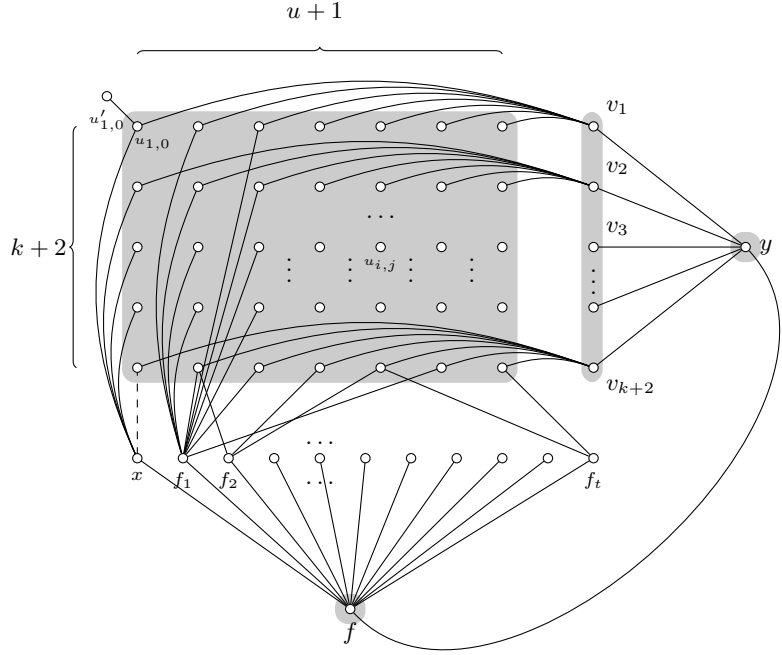
\begin{figure}[t]
%\vskip 2mm
\begin{center}

\begin{tikzpicture}[scale=0.8,vertex/.style={draw,circle,minimum size=3.5pt,inner sep=1pt,fill=white}]
\tikzfading[name=fade out, inner color=transparent!0,
         outer color=transparent!100]
         
\fill [black!20,rounded corners=5pt] (-0.25,-0.25) rectangle (6.25,4.25);
\fill [black!20,rounded corners=5pt] (7.3,-0.25) rectangle (7.65,4.25);
\fill [black!20,rounded corners=5pt] (3.25,-4.25) rectangle (3.75,-3.75);
\fill [black!20,rounded corners=5pt] (9.75,1.75) rectangle (10.25,2.25);

\foreach \y in {1,...,4}{
  \draw (0,\y) edge [bend right=25] (0,-1.5);
}
\draw (0,0) edge [dashed] (0,-1.5);

\draw (-0.5,4.5) edge (0,4);
\node [vertex, label=below:\tiny{$u'_{1,0}$}] at (-0.5,4.5) {};
\node at (0.25,3.7) {\tiny{$u_{1,0}$}};

\node [vertex,label=below:$f$] at (3.5,-4)(F){};
\foreach \x in {0,...,10}{
\draw (F) edge (\x * 0.75,-1.5);
}
\node[vertex, label=below:\scriptsize{$x$}] at (0,-1.5)(f_0){};
\foreach \x in {3,...,8}{
\node[vertex] at (\x * 0.75,-1.5)(f_\x){};
}
\node[vertex, label=below:\scriptsize{$f_1$}] at ( 0.75,-1.5)(f_1){};
\node[vertex, label=below:\scriptsize{$f_2$}] at (1.5,-1.5)(f_2){};
\node[vertex] at (6.75,-1.5){};
\node[vertex, label=below:\scriptsize{$f_{t}$}] at (7.5,-1.5)(f_t){};
\node at (3,-1.9) {$\hdots$};

\draw (f_1) edge (1,0);
\draw (f_1) edge [bend left=20](1,1);
\draw (f_1) edge [bend left=20](1,2);
\draw (f_1) edge [bend left=20](1,3);
\draw (f_1) edge [bend left=20](1,4);
\draw (f_1) edge (2,0);
\draw (f_1) edge (2,1);
\draw (f_1) edge (2,2);
\draw (f_1) edge (2,3);
\draw (f_1) edge (2,4);
\node at (2.5,1.75) {$\vdots$};
\node at (3.5,1.75) {$\vdots$};
\node at (4.5,1.75) {$\vdots$};
\node at (5.5,1.75) {$\vdots$};
\draw (f_1) edge (5,0);
\draw (f_2) edge (1,0);
\draw (f_2) edge (3,0);
\draw (f_2) edge (4,0);
\node at (3,-1.25) {$\hdots$};
\draw (f_t) edge (6,0);
\draw (f_t) edge (4,0);

\node [vertex, label=right:$y$] at (10,2)(H){};
\foreach \x in {0,...,4}{
\draw (H) edge (7.5,\x);
}

\foreach \x in {0,...,6}{
\draw (7.5,4) edge [bend right=20] (\x,4);
\draw (7.5,3) edge [bend right=20] (\x,3);
\draw (7.5,0) edge [bend right=20] (\x,0);
}
\node at (4,2.5) {$\hdots$};

\foreach \y in {1,...,3}{
\node[vertex, label=above right:$v_{\y}$] at (7.5, 5-\y){};
}
\node [vertex, label=above:$\vdots$] at (7.5,1){};
\node [vertex, label=below right:$v_{k+2}$] at (7.5,0){};

\draw (F) edge [bend right=90] (H);

\foreach \x in {0,...,6}{
  \foreach \y in {0,...,4}{
    \node[vertex] at (\x,\y){};
%    \node[vertex] at (7.5,\y){};
  }
}
\node at (4,1.7) {\tiny{$u_{i,j}$}};

\draw [decoration={brace},decorate] (-1,0) -- (-1,4) 
node [pos=0.57,anchor=north,xshift=-0.50cm] {$k+2$}; 
\draw [decoration={brace},decorate] (0,5.2) -- (6,5.2) 
node [pos=0.57,anchor=north,xshift=-0.40cm, yshift=0.8cm] {$u+1$};
\end{tikzpicture}

\end{center}

%\centerline{\includegraphics[width=10cm]{L1.pdf}}
\caption{Drawing of the reduction graph. Here the leaves $u'_{i,j}$ are left out of the drawing for clarity.
 The connected vertex cover for this graph using the side vertices $v_m$ is shaded in grey.}
\label{fig:cscsol1}
\end{figure}
 
 \item Take a selection of $F_\ell$ that covers all columns and also take $v_{k+2}$ in order to connect the vertex $u_{k+2,0}$. Finally, we also take $y$ to
 cover the edges $(v_i,y)$.
 This makes a total of $(k+2)(u+1)+SCsol+1+3$ where $SCsol+1$ stands for the size of a Set Cover solution for $(\setU,\F)$ together with the vertex $x$ and $+~3$ stands for the vertices
 $f$, $v_k$ and $y$. We will name this solution $S_2$. (See Fig.~\ref{fig:cscsol2})
 
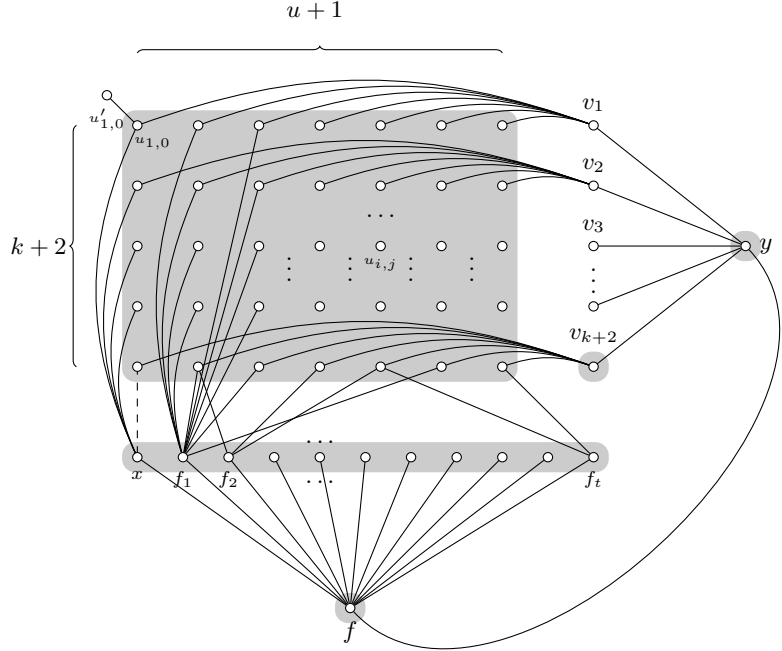
\begin{figure}[t]
%\vskip 2mm
\begin{center}

 \begin{tikzpicture}[scale=0.8,vertex/.style={draw,circle,minimum size=3.5pt,inner sep=1pt,fill=white}]
 \tikzfading[name=fade out, inner color=transparent!0,
         outer color=transparent!100]
         
 \fill [black!20,rounded corners=5pt] (-0.25,-0.25) rectangle (6.25,4.25);
\fill [black!20,rounded corners=5pt] (-0.25,-1.25) rectangle (7.75,-1.75);
\fill [black!20,rounded corners=5pt] (3.25,-4.25) rectangle (3.75,-3.75);
\fill [black!20,rounded corners=5pt] (9.75,1.75) rectangle (10.25,2.25);
\fill [black!20,rounded corners=5pt] (7.25,-0.25) rectangle (7.75,0.25);

\foreach \y in {1,...,4}{
  \draw (0,\y) edge [bend right=25] (0,-1.5);
}
\draw (0,0) edge [dashed] (0,-1.5);

\draw (-0.5,4.5) edge (0,4);
\node [vertex, label=below:\tiny{$u'_{1,0}$}] at (-0.5,4.5) {};
\node at (0.25,3.7) {\tiny{$u_{1,0}$}};

\node [vertex,label=below:$f$] at (3.5,-4)(F){};
\foreach \x in {0,...,10}{
\draw (F) edge (\x * 0.75,-1.5);
}
\node[vertex, label=below:\scriptsize{$x$}] at (0,-1.5)(f_0){};
\foreach \x in {3,...,8}{
\node[vertex] at (\x * 0.75,-1.5)(f_\x){};
}
\node[vertex, label=below:\scriptsize{$f_1$}] at ( 0.75,-1.5)(f_1){};
\node[vertex, label=below:\scriptsize{$f_2$}] at (1.5,-1.5)(f_2){};
\node[vertex] at (6.75,-1.5){};
\node[vertex, label=below:\scriptsize{$f_{t}$}] at (7.5,-1.5)(f_t){};
\node at (3,-1.9) {$\hdots$};

\draw (f_1) edge (1,0);
\draw (f_1) edge [bend left=20](1,1);
\draw (f_1) edge [bend left=20](1,2);
\draw (f_1) edge [bend left=20](1,3);
\draw (f_1) edge [bend left=20](1,4);
\draw (f_1) edge (2,0);
\draw (f_1) edge (2,1);
\draw (f_1) edge (2,2);
\draw (f_1) edge (2,3);
\draw (f_1) edge (2,4);
\node at (2.5,1.75) {$\vdots$};
\node at (3.5,1.75) {$\vdots$};
\node at (4.5,1.75) {$\vdots$};
\node at (5.5,1.75) {$\vdots$};
\draw (f_1) edge (5,0);
\draw (f_2) edge (1,0);
\draw (f_2) edge (3,0);
\draw (f_2) edge (4,0);
\node at (3,-1.25) {$\hdots$};
\draw (f_t) edge (6,0);
\draw (f_t) edge (4,0);

\node [vertex, label=right:$y$] at (10,2)(H){};
\foreach \x in {0,...,4}{
\draw (H) edge (7.5,\x);
}

\foreach \x in {0,...,6}{
\draw (7.5,4) edge [bend right=20] (\x,4);
\draw (7.5,3) edge [bend right=20] (\x,3);
\draw (7.5,0) edge [bend right=20] (\x,0);
}
\node at (4,2.5) {$\hdots$};

\foreach \y in {1,...,3}{
\node[vertex, label=above:$v_{\y}$] at (7.5, 5-\y){};
}
\node [vertex, label=above:$\vdots$] at (7.5,1){};
\node [vertex] at (7.5,0){};
\node at (7.5,0.5) {$v_{k+2}$};
\draw (F) edge [bend right=90] (H);

\foreach \x in {0,...,6}{
  \foreach \y in {0,...,4}{
    \node[vertex] at (\x,\y){};
%    \node[vertex] at (7.5,\y){};
  }
}
\node at (4,1.7) {\tiny{$u_{i,j}$}};

\draw [decoration={brace},decorate] (-1,0) -- (-1,4) 
node [pos=0.57,anchor=north,xshift=-0.50cm] {$k+2$}; 
\draw [decoration={brace},decorate] (0,5.2) -- (6,5.2) 
node [pos=0.57,anchor=north,xshift=-0.40cm, yshift=0.8cm] {$u+1$};
\end{tikzpicture}

\end{center}

%\centerline{\includegraphics[width=10cm]{L1.pdf}}
\caption{Second CVC option using a SC for $\F$. Observe, that not all of the vertices $f_i$, are part of the cover, only a selection of them.}
\label{fig:cscsol2}
\end{figure}
 
\end{enumerate}
Mixing these two strategies is not a good option because, for any column $j$ missed by the set cover, in order to connect every vertex $u_{i,j}$ to 
the rest of the vertex cover, we would need to take all vertices $v_i$, $1\le i\le k+2$, into the cover, rendering the selection of any $f_{\ell}$ pointless.

Both solutions are of the same size if and only if the size of the optimal set cover for $\F$ is $k$.

Let us assume that, if the size of the optimal set cover for \F is $k$, then the \covc instance consists of $((G,c+2),S_1,(G+e,c+1))$ 
(both solutions $S_1$ and $S_2$ are optimal if this is the case, so $S_1$ is a valid choice).

In order to solve the instance for $G+e$, we have to consider that,
once the dashed edge is added to the graph, $S_2$ is still a solution and furthermore it can be reduced by one vertex by not taking $v_{k+2}$, i.e., $S_2-v_{k+2}$ is a solution for $G+e$. This vertex
is not needed anymore because now the edge $e$ already connects $u_{k+2,0}$, via the vertex $x$, to the rest of the vertex cover. 
However, it is easy to see that one cannot remove any vertex of $S_1$. This means that only $S_2-v_{k+2}$ is optimal, once the new edge is added.

We can conclude that, if we could get a polynomial kernel for $(G+e,c+1)$ given the instance $((G,c+2),S_1,(G+e,c+1))$
then we would be able to find an instance $(G',k')$ polynomial in $k\cdot u$ (and thus, because $k\le u$, polynomial in $u$) such that $(G',k')\in \rm{CVC}$
if and only if $((\setU,\F, k), u)\in \rm{SC}$ (\setc). This by definition means that we would have a polynomial compression for \setc parameterized by the size
of the universe. But \setc parameterized by the size of the universe does not admit a polynomial compression unless $\rm{NP}\subseteq \rm{coNP}/\rm{poly}$.
\qed
\end{proof}

\section{Reoptimization and Vertex Cover}
\label{sec:vc}

Another case where we observe the power of reoptimization in parameterized problems is in the \verc(VC) problem.
\verc is a problem in PK, whose best known polynomial kernel is of size $2k$ using linear programming \cite{NT75}. 
However, using crown decomposition only
allows us to achieve a kernel of size $3k$ in the classical setting~\cite{AFLS07} (see also~\cite{FG06}).
Very recently and independent of our work, there have been attempts to reduce the size of the crown decomposition kernel by refining the method, like in~\cite{LZ18}.
We present here a way to achieve a kernel of size $2k$ using reoptimization and crown decomposition.

First we define the problem. A vertex cover of a graph $G=(V,E)$ is a subset $A\subseteq V$ such that every edge is covered, \ie, every edge $e\in E$ is
incident to a vertex $v\in A$. As a parameterized problem, we say $(G,k)\in \rm{VC}$ if there exists a vertex cover of $G$ of size $k$ 
or smaller.

The crown decomposition is a structure in a graph that can be defined as follows. (It is shown schematically in
Fig.~\ref{fig:crowndecomp}.) 

\begin{definition}\label{def:crown-decomp}
  Let $G=(V,E)$ be a graph. A \emph{crown decomposition}
  \index{crown decomposition} of $G$ is a partition of $V$ into three sets
  $C$, $H$, and $R$ satisfying the following properties.
  \begin{enumerate} 
  \item $C$ is a non-empty independent set in $G$,
  \item There are no edges between $C$ and $R$,
  \item The set of edges between $C$ and $H$ contains a matching $M$ of
    size $|H|$, we also say that $M$
    \emph{saturates}\index{matching!saturating} $H$.
  \end{enumerate}

  We call $C$ the \emph{crown}, $H$ the
  \emph{head}, and $R$ the
  \emph{rest} of the crown decomposition. 
\end{definition}

\begin{figure}[t]
\begin{center}
\begin{tikzpicture}[scale=0.8, point/.style={draw,circle,minimum size=3.5pt,inner sep=1pt,fill=white},path/.style={very thick},arrow/.style={-stealth, densely dashed},font=\scriptsize]
  \draw[white, fill=black!20] (-1.5,-3.2) ellipse (2cm and 0.75cm);
  \node[point] (c1) at (0,0) {};
  \node[point] (c2) at (-1,0) {};
  \node[point] (c3) at (-2,0) {};
  \node[point] (c4) at (-3,0) {};
  \node[point] (h1) at (-0.5,-1.5) {};
  \node[point] (h2) at (-1.5,-1.5) {};
  \node[point] (h3) at (-2.5,-1.5) {};
  \node[point] (b1) at (0,-3) {};
  \node[point] (b2) at (-0.75,-3.5) {};
  \node[point] (b3) at (-1.5,-3) {};
  \node[point] (b4) at (-2.25,-3.5) {};
  \node[point] (b5) at (-3,-3) {};
  \node (B) at (1,-3.2){$R$};
  \node (H) at (1,-1.5) {$H$};
  \node (C) at (1,0) {$C$};
  \draw
    (h1) edge[path] (c1)
    (h1) edge (h2)
    (h1) edge (c2)
    (h1) edge (c3)
    (h2) edge (c2)
    (h2) edge[path] (c3)
    (h2) edge (c4)
    (h3) edge[path] (c4)
    (h3) edge (c3)
    (h1) edge (b1)
    (h1) edge (b2)
    (h2) edge (b2)
    (h2) edge (b3)
    (h2) edge (b4)
    (h3) edge (b5)
    (b1) edge (b2)
    (b1) edge (b4)
    (b2) edge (b3)
    (b3) edge (b4)
    (b3) edge (b5)
    (b4) edge (b5);
\end{tikzpicture}
\end{center}
\caption{Example of a crown decomposition of a graph.}
\label{fig:crowndecomp}
\end{figure}
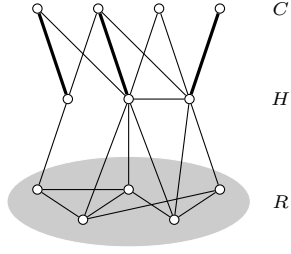

The crown lemma tells us under which conditions crown decompositions exist, and is the basis for kernelization using crown decomposition.
\begin{lemma}[\cite{AFLS07}]\label{lm:crown}
 Let $G$ be a graph without isolated vertices and with at least $3k+1$ vertices. There is a polynomial-time algorithm that either finds
 a matching of size $k+1$ in $G$ or finds a crown decomposition of $G$.
\end{lemma}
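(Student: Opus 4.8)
The plan is to derive the crown decomposition from a maximum matching together with König's theorem. First I would compute a maximum matching $M$ of $G$ in polynomial time (e.g.\ by Edmonds' blossom algorithm). If $|M|\ge k+1$, I simply output $M$ and stop. Otherwise $|M|\le k$; set $V_M:=V(M)$, so $|V_M|\le 2k$, and $I:=V\setminus V_M$. By maximality of $M$ the set $I$ is independent, and since $|V|\ge 3k+1$ we get $|I|\ge 3k+1-2k=k+1$; in particular $I\neq\emptyset$, and because $G$ has no isolated vertices every vertex of $I$ has a neighbor in $V_M$.

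Next I would form the bipartite graph $B$ with parts $I$ and $V_M$ and edge set exactly the edges of $G$ having one endpoint in each part, and compute a maximum matching $M'$ and a minimum vertex cover $X$ of $B$; by König's theorem both are polynomial-time computable and $|X|=|M'|$. Since $M'$ is also a matching of $G$ we have $|M'|\le|M|\le k$, hence $|X|\le k$. The key structural fact I would record here: as $X$ covers every edge of $M'$ and the $M'$-edges are vertex-disjoint with $|X|=|M'|$, each edge of $M'$ has \emph{exactly one} endpoint in $X$, so $M'$ saturates $X$.

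Then I would set $C:=I\setminus X$, $H:=X\cap V_M$, and $R:=V\setminus(C\cup H)=(V_M\setminus X)\cup(I\cap X)$, and check the three crown-decomposition conditions. Condition~(1): $C\subseteq I$ is independent, and $|C|\ge|I|-|X|\ge(k+1)-k=1$, so $C$ is non-empty — this is exactly where the hypothesis $|V|\ge 3k+1$ enters. Condition~(2): every neighbor of a vertex of $C$ lies in $V_M$ (as $I$ is independent and $G$ has no isolated vertices), so an edge from $C$ to $R$ would be an edge of $B$ between $I\setminus X$ and $V_M\setminus X$, which is impossible since $X$ is a vertex cover of $B$; this also forces $H\neq\emptyset$. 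Condition~(3): an $M'$-edge incident to some $h\in H\subseteq X$ has its other endpoint in $I$, and that endpoint cannot lie in $X$ (otherwise the edge would have two endpoints in $X$), hence it lies in $C$; since $M'$ saturates $X\supseteq H$, this yields a matching of size $|H|$ between $C$ and $H$ inside $B$, i.e.\ among the edges between $C$ and $H$. Outputting $(C,H,R)$ completes the algorithm.

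The step requiring the most care is linking the König matching $M'$ to the head $H$: the ``exactly one endpoint in $X$ per $M'$-edge'' observation does double duty here, simultaneously guaranteeing that $M'$ saturates $H$ and that the matched partners land in $C=I\setminus X$ rather than in $I\cap X$. The remaining work — independence of $C$, absence of $C$–$R$ edges, non-emptiness of $C$ — is routine bookkeeping with the cardinalities, with the bound $3k+1$ used only to guarantee $|I|>|X|$.
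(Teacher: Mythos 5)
The paper states this crown lemma purely as a citation to \cite{AFLS07} and gives no proof of its own, so there is nothing in the paper to compare your argument against. Your proof is correct and is the standard K\"onig's-theorem derivation of the crown lemma found in the kernelization literature (e.g.\ Cygan et al.\ \cite{CFK+15}): compute a (maximum or maximal) matching $M$; if small, pass to the independent remainder $I$, take a minimum vertex cover $X$ and maximum matching $M'$ of the bipartite graph between $I$ and $V_M$, and read off $C=I\setminus X$, $H=X\cap V_M$. Two small virtues of your write-up worth noting: you correctly place $I\cap X$ into $R$ rather than assuming $X\subseteq V_M$ (an assumption some expositions make without justification, and which is false in general), and the ``each $M'$-edge has exactly one endpoint in $X$'' observation is exactly the pigeonhole fact needed to get both the saturation of $H$ and the landing of the matched partners in $C$.
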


This lemma allows us to reduce any \verc instance to size at most $3k$.
This is because, given a graph of size larger than $3k$, we either find a matching of size $k+1$ or a crown decomposition of $G$.
Given a crown decomposition of $G$ into $H$, $C$ and $R$, take the maximum matching between $H$ and $C$. This matching provides
proof that any vertex cover for $G$ will need at least $|H|$ vertices to cover the vertices in the matching. Thus, we may reduce 
an instance $(G,k)$ to an instance $(G-(H\cup C),k-|H|)$. 

Let us consider the \verc problem under edge addition, \ie, $e^+$-VC. Given an (optimal) $k$-vertex cover for a graph $G$, we will give a kernel of size $2k$
for $(G+e,k)$ using crown decomposition.
First of all, let $A\subseteq V$ be the optimal vertex cover of $G$,  and $B\subseteq V$ the rest of the vertices in $G$.
Then, $|A|=k$ and $|B|=n-k$.
Observe, that there might be edges between vertices of $A$ but not between vertices of $B$, otherwise $A$ would not be a vertex cover.

\begin{figure}[t]
 \begin{center}
  \subfloat[First crown decomposition\label{fig:cd1}]{
    \begin{tikzpicture}[point/.style={draw, ellipse, minimum height=8mm, align=center, fill=white},path/.style={line width=0.7mm},font=\scriptsize]
      \draw[white] (-2,0)--(4,0)--(4,-4.5)--(-2,-4.5);
      \fill[black!20, rounded corners=5pt] (1.7, 0.5)--(2.8, 0.5)--(2.8,-5.5)--(1.7,-5.5)--(1.7, -2.5)-- (2.5,-2.5)--(2.5,-1)--(1.7,-1)--cycle;
      \node (c) at (2.5,-4){\normalsize{$\bf{C_1}$}};
      
      \fill[black!20, rounded corners=5pt] (0.3, 0.5)--(-0.8, 0.5)--(-0.8,-4)--(0.3,-4)--(0.3, -2.5)-- (-0.5,-2.5)--(-0.5,-1)--(0.3,-1)--cycle;
      \node (h) at (-0.5,-3){\normalsize{$\bf{H_1}$}};

      \node[point,label=below:$A^3_M$] (h1) at (0,0) {};
      \node[point,label=below:$A^2_M$] (h2) at (0,-1.5) {};
      \node[point,label=below:$B^3_M$] (c1) at (2,0) {};
      \node[point,label=below:$B^2_M$] (c2) at (2,-1.5) {};
      \node[point,label=below:$A^1_M$] (v1) at (0,-3) {};
      \node[point,label=below:$B^1_M$] (v2) at (2,-3) {};
      \node[point,label=below:$A_{\overline{M}}$] (v3) at (0,-4.5) {};
      \node[point,label=below:$B_{\overline{M}}$] (v4) at (2,-4.5) {};
      \node (vm) at (0,-6){\normalsize{$\bf{A}$}};
      \node (i) at (2,-6) {\normalsize{$\bf{B}$}};
      %\node (H) at (-0.2,-0.5) {$A_M$};
      %\node (C) at (2.5,-0.75) {$B_M$};
      \draw
	(h1) edge[path] (c1)
	(h1) edge[dashed] (c2)
	(v1) edge[dashed] (c1)
	(v1) edge[dashed] (c2)
	(h2) edge[path] (c2)
	(v1) edge (v4)
	(v3) edge (c2)
	(v1) edge[path] (v2);
    \end{tikzpicture}
  }
  \subfloat[Second crown decomposition\label{fig:cd2}]{
    \begin{tikzpicture}[point/.style={draw, ellipse, minimum height=8mm, align=center, fill=white},path/.style={line width=0.7mm},font=\scriptsize]
      \draw[white] (-2,0)--(4,0)--(4,-5.5)--(-2,-5.5);
       \fill[black!20, rounded corners=5pt] (2.8, -2.5)--(2.8,-5.5)--(1.7,-5.5)--(1.7,-2.5)--cycle;
      \node (c) at (2.5,-4){\normalsize{$\bf{C_2}$}};
      
      \fill[black!20, rounded corners=5pt] (-0.8, -2.5)--(-0.8,-4)--(0.3,-4)-- (0.3, -2.5) --cycle;
      \node (h) at (-0.5,-3){\normalsize{$\bf{H_2}$}};

      \draw[white] (-2,0)--(4,0)--(4,-4.5)--(-2,-4.5);
      \node[point,label=below:$A^3_M$] (h1) at (0,0) {};
      \node[point,label=below:$A^2_M$] (h2) at (0,-1.5) {};
      \node[point,label=below:$B^3_M$] (c1) at (2,0) {};
      \node[point,label=below:$B^2_M$] (c2) at (2,-1.5) {};
      \node[point,label=below:$A^1_M$] (v1) at (0,-3) {};
      \node[point,label=below:$B^1_M$] (v2) at (2,-3) {};
      \node[point,label=below:$A_{\overline{M}}$] (v3) at (0,-4.5) {};
      \node[point,label=below:$B_{\overline{M}}$] (v4) at (2,-4.5) {};
      \node (vm) at (0,-6){\normalsize{$\bf{A}$}};
      \node (i) at (2,-6) {\normalsize{$\bf{B}$}};
      %\node (H) at (-0.2,-0.5) {$A_M$};
      %\node (C) at (2.5,-0.75) {$B_M$};
      \draw
	(h1) edge[path] (c1)
	(h1) edge[dashed] (c2)
	(v1) edge[dashed] (c1)
	(v1) edge[dashed] (c2)
	(h2) edge[path] (c2)
	(v1) edge (v4)
	(v3) edge (c2)
	(v1) edge[path] (v2);
      
    \end{tikzpicture}
  }
\end{center}
\caption{Partition of $G$. Bold edges belong to the matching, solid edges indicate that there exist edges between the two subsets and dashed edges indicate that edges might exist
between the two groups. Any non-edge between $A$ and $B$ indicates that edges do not exist between the two subsets by construction. The shaded subsets indicate the head and crown of 
two possible crown decompositions parting from this partition of $G$, the non-shaded subsets are the rest sets $R_1$ and $R_2$, respectively.}
\label{fig:vercov}
\end{figure}
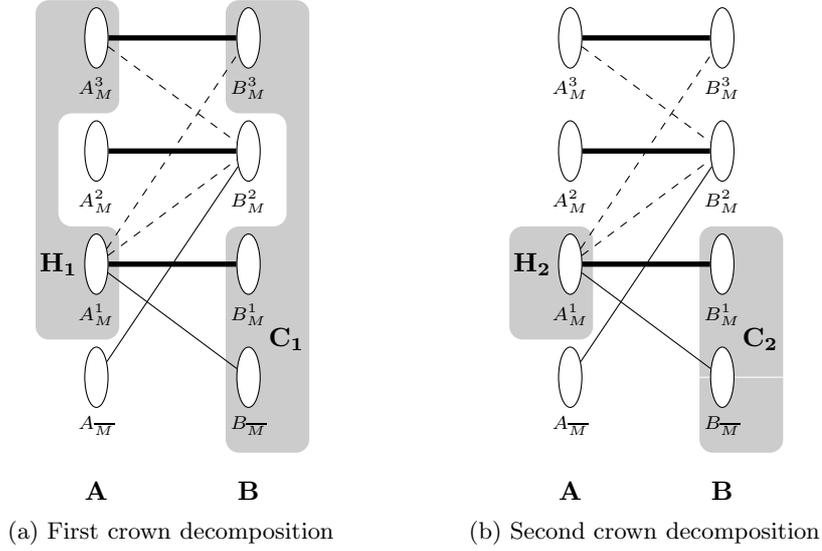
Let us pick $M$ to be a maximal matching between $A$ and $B$.
We will now partition $A$ and $B$ further into subsets according to their adjacencies (see Fig.~\ref{fig:vercov}). First, we consider the vertices of $A$ and $B$ that are 
not part of the matching, let us call these vertices $A_{\overline{M}}$ and $B_{\overline{M}}$ respectively. The vertices of these two subsets do not share edges because otherwise $M$ would not 
be maximal. The vertices of $A$ and $B$ that are part of the matching will be $A_{M}$ and $B_{M}$, respectively. 
Now let $A_{M}^1$ be the matched vertices in $A$ that have an alternating path to at least one vertex in $B_{\overline{M}}$ and
let $B_{M}^1$ be the vertices matched to those from $A_{M}^1$. Let then $B_{M}^2$ be the vertices in $B$ that have an alternating path to at least one vertex in $A_{\overline{M}}$ and 
let $A_{M}^2$ be the vertices matched to those from $B_{M}^2$. These four subsets have no intersection because otherwise there would be an augmenting path
starting in $B_{\overline{M}}$ through an alternating path to $v\in A_{M}^1\cap A_{M}^2$ and  through its matched vertex in $B_{M}^2$ and another alternating path to $A_{\overline{M}}$, contradicting
the maximality of $M$. Let then $A_{M}^3$ and $B_{M}^3$ be the rest of the matched vertices in $A$ and $B$, respectively.

Observe, through Fig.~\ref{fig:cd1}, that the following is a valid crown decomposition for $G$: 
\[B_{\overline{M}}\cup B_{M}^1 \cup B_{M}^3=C_1\text{, }A_M^1\cup A_M^3=H_1\text{ and }A_{\overline{M}}\cup A_M^2\cup B_M^2=R_1.\]
This is true because $B$ is an independent set by construction, and there are no edges between $C_1$ and $R_1$ because $B_{\overline{M}}$ and $B_{M}^1$
have edges 
neither to $A_{\overline{M}}$ nor to $A_M^2$ and neither does $B_{M}^3$.
Moreover, as depicted in Fig.~\ref{fig:cd2},
\[B_{\overline{M}}\cup B_{M}^1=C_2\text{, }A_M^1=H_2\text{ and }B_{M}^3\cup A_{M}^3\cup R_1=R_2,\] 
is also a valid crown decomposition, as $B_{\overline{M}}$ and  $B_{M}^1$ also have no edges to $A_{M}^3$.

In the second crown decomposition, $|R_2|< 2k$ because $|A|\le k$ and thus $|B_M|\le k$, meaning that $|B_{\overline{M}}|\ge n-2k$ and $B_{\overline{M}}$ is always part of $C$. 
If $H$ is empty, it means that there is a set of isolated vertices in $G$ of size  at least $n-2k$ and we can erase them.

Now we prove using these two crown decompositions that we can construct a crown decomposition of size $2k$ for any $G+e$ under edge addition.
If the new edge is incident to an isolated vertex, we can use 
the following reduction: if $G+e-\{\text{isolated vertices}\}$ contains a leaf, add the vertex adjacent to the leaf to the cover and use the rest of the graph as a ($2k-1$)-sized kernel.
Thus, except in this special case, $|C_2|\ge n-2k+1$, $|H_2|\ge 1$, and $|R_2|< 2k-1$.

If the new edge $e$ is adjacent to any vertex in $A$, the optimal vertex cover for $G$ is
also an optimal vertex cover for $G+e$ and thus the problem is solved. If $e$ is adjacent to two vertices $u$ and $v$ in $B$, we make the following case distinction.
\begin{description}
 \item[\bf{Case 1}\namedlabel{case1}{Case 1}] $u,v\in B^2_{M}\cup B^3_{M}$: $C_2,\ H_2$ and $R_2$ are also a crown decomposition for $G+e$.
 \item[\bf{Case 2}\namedlabel{case2}{Case 2}] $u,v\in B_{\overline{M}}$: Set $H=H_1\cup u$ and $C=C_1-u$ and $R=R_1$. The new edge $e$ provides the matching between $u$ and $v$, so $H$, $C$ and $R$ as defined are a crown decomposition.
 \item[\bf{Case 3}\namedlabel{case3}{Case 3}] $u\in  B_{\overline{M}}\text{, }v\in B^2_{M}\cup B^3_{M}$: Set $R=R_2\cup u$, $C=C_2-u$ and $H=H_2$. This provides a valid crown decomposition, as $e$ will be left inside $R$.
 \item[\bf{Case 4}\namedlabel{case4}{Case 4}] $u\in  B^1_{M}\text{, }v\in B^2_{M}\cup B^3_{M}$: There is always an alternating path between the vertex matched to $u$ and $B_{\overline{M}}$. This path provides
 an alternative maximum matching $M'$ that does not use $u$. Thus, we are in \ref{case3}.
 \item[\bf{Case 5}\namedlabel{case5}{Case 5}] $u\in B^1_{M}\text{, }v\in B^1_{M}\cup B_{\overline{M}}$: Using the same technique as in \ref{case4}, we can assume $v\in B_{\overline{M}}$. If there is an alternating
 path from $u$ to $B_{\overline{M}}-v$, we can, again, use the same technique as in \ref{case4} and we are in \ref{case2}. Otherwise, every alternating path from $u$ to $B_{\overline{M}}$
 leads exclusively to $v$. Meaning that there is a set of vertices $B_v\subseteq B^1_M$ and $A_v\subseteq A^1_M$ that do not contain edges to  any other vertex in $B^1_M$ or $B_{\overline{M}}$.
 Redefining $R= v\cup B_v\cup A_v\cup R_2$, $H=H_2-A_v$ and $C=C_2-(B_v\cup v)$, we have a valid crown decomposition.
\end{description}

For every crown decomposition we defined, the set $R$ contains not more than $2k$ vertices, thus these decompositions provide a kernel of size $2k$ for $e^+$-\verc.

If we consider \verc with other local modifications, we observe it is easy to use the same technique. Adding vertices and deleting vertices or edges allows us to use exactly the same 
crown decomposition and similar techniques to find kernels of size at most $2k$.

\section{Conclusions and Further Research}

We presented examples of problems that do not have polynomial kernels under standard complexity-theoretic assumptions, but whose reoptimization versions have polynomial kernels.
We also presented an example, where a kernel using the same technique is smaller in the reoptimization version of the problem. We finally presented a reduction proving that
there are problems and local modifications, for which the complexity does not decrease when considering reoptimization.

In conclusion, there are problems that are easier under reoptimization conditions and problems that are not. We hope that further research will help us to better understand how 
much information neighboring solutions are providing and when this
information is helpful.

\section*{Acknowledgements}

We thank Fabian Frei, Janosch Fuchs, Juraj Hromkovi\v{c}, Dennis Komm, Tobias M\"{o}mke and Walter Unger
for helpful comments and discussions.

\bibliographystyle{plain}

\end{document}